\def\calT{\mathcal{T}}
\def\e{\overline{e}}
\title{Algorithms for Diameters of Unicycle Graphs and Diameter-Optimally Augmenting Trees\footnote{A preliminary version of this paper appears in the Proceedings of the 15th International Conference and Workshops on Algorithms and Computation (WALCOM 2021).}} 
\titlerunning{Diameters of Unicycle Graphs and Diameter-Optimally Augmenting Trees} 
\author{Haitao Wang}{Department of Computer Science,
Utah State University, Logan, UT 84322, USA.}{haitao.wang@usu.edu}{}{}
\author{Yiming Zhao\footnote{Corresponding author.}}{Department of Computer Science,
Utah State University, Logan, UT 84322, USA.}{yiming.zhao@usu.edu}{}{}
\authorrunning{H. Wang and Y. Zhao} 
\keywords{diameter, unicycle graphs, augmenting trees, shortcuts} 
\begin{document}

\maketitle

\begin{abstract}
We consider the problem of computing the diameter of a unicycle graph (i.e., a graph with a unique cycle). We present an $O(n)$ time algorithm for the problem, where $n$ is the number of vertices of the graph. This improves the previous best $O(n\log n)$ time solution [Oh and Ahn, ISAAC 2016]. Using this algorithm as a subroutine, we solve the problem of adding a shortcut to a tree so that the diameter of the new graph (which is a unicycle graph) is minimized; our algorithm takes $O(n^2\log n)$ time and $O(n)$ space. The previous best algorithms solve the problem in $O(n^2\log^3 n)$ time and $O(n)$ space [Oh and Ahn, ISAAC 2016], or in $O(n^2)$ time and $O(n^2)$ space [Bil{\`o}, ISAAC 2018].
\end{abstract}

\section{Introduction}
\label{sec:Introduction}

Let $G$ be a graph of $n$ vertices such that each edge has a positive length. A {\em shortest path} connecting two vertices $s$ and $t$ in $G$ is a path of minimum total edge length; the length of the shortest path is also called the {\em distance} between $s$ and $t$ in $G$. The {\em diameter} of $G$ is the maximum distance between all pairs of vertices of $G$. $G$ is a {\em unicycle graph} if it has only one cycle, i.e., $G$ is a tree plus an additional edge.

We consider the problem of computing the diameter of a unicycle graph $G$. Previously, Oh and Ahn~\cite{ref:OhA16} solved the problem in $O(n\log n)$ time, where $n$ is the number of vertices of $G$. We present an improved algorithm of $O(n)$ time.
Using our new algorithm, we also solve the \emph{diameter-optimally augmenting tree} (DOAT for short) problem, defined as follows.

Let $T$ be a tree of $n$ vertices such that each edge has a positive length. We want to add a new edge (called {\em shortcut}) to $T$ such that the new graph (which is a unicycle graph) has the minimum diameter. We assume that there is an oracle that returns the length of any given shortcut in $O(1)$ time. Previously, Oh and Ahn~\cite{ref:OhA16} solved the problem in $O(n^2\log^3 n)$ time and $O(n)$ space, and Bil{\`o}~\cite{ref:BiloAl18} reduced the time to $O(n^2)$ but the space increases to $O(n^2)$. As observed by Oh and Ahn~\cite{ref:OhA16}, the problem has an $\Omega(n^2)$ lower bound on the running time as all $\Theta(n^2)$ possible shortcuts have to be checked in order to find an optimal shortcut. Hence, Bil{\`o}'s algorithm is time-optimal.
In this paper, we propose an algorithm with a better time and space trade-off, and our algorithm uses $O(n^2\log n)$ time and $O(n)$ space.

\subsection{Related work}
\label{subsec:RelatedWork}

The diameter is an important measure of graphs and computing it is one of the most fundamental algorithmic graph problems. For general graphs or even planar graphs, the only known way to compute the diameter is to first solve the all-pair-shortest-path problem (i.e., compute the distances of all pairs of vertices of the graph), which inherently takes $\Omega(n^2)$ time, e.g.,~\cite{ref:FedericksonFa87,ref:WilliamsFa18}.
Better algorithms exist for special graphs. For example, the diameter of a tree can be computed in linear time, e.g., by first computing its center~\cite{ref:MegiddoLi83}.
If $G$ is an outerplanar graph and all edges have the same length, its diameter can be computed in linear time~\cite{ref:FarleyCo80}. The diameter of interval graphs (with equal edge lengths) can also be computed in linear time~\cite{ref:Olariu90}. Our result adds the unicycle graph (with different edge lengths) to the linear-time solvable graph category.


The DOAT problem and many of its variations enjoy an increasing interest in the research community. If the tree $T$ is embedded in a metric space (so that the triangle inequality holds for edge lengths), Gro{\ss}e et al. \cite{ref:GrobeFa16} first solved the problem in $O(n^2 \log n)$ time. Bil{\`o} \cite{ref:BiloAl18} later gave an $O(n \log n)$ time and $O(n)$ space algorithm, and another $(1 + \epsilon)$-approximation algorithm of $O(n + \frac{1}{\epsilon} \log \frac{1}{\epsilon})$ time and $O(n + \frac{1}{\epsilon})$ space for any $\epsilon>0$. A special case where $T$ is a path embedded in a metric space was first studied by Gro{\ss}e et al.~\cite{ref:GrobeFa15}, who gave an $O(n\log^3 n)$ time algorithm,
and the algorithm was later improved to $O(n\log n)$ time by Wang~\cite{ref:WangAn18}. Hence,  Bil{\`o}'s work~\cite{ref:BiloAl18} generalizes Wang's result~\cite{ref:WangAn18} to trees.

A variant of the DOAT problem which aims to minimize the \emph{continuous diameter}, i.e., the diameter of $T$ is measured with respect to all the points of the tree (including the points in the interior of the edges), has also been studied. If $T$ is a path embedded in the Euclidean plane, De Carufel et al.~\cite{ref:DeCarufelMi16} solved the problem in $O(n)$ time. If $T$ is a tree embedded in a metric space, De Carufel et al.~\cite{ref:DeCarufelMi17} gave an $O(n\log n)$ time algorithm.
If $T$ is a general tree, Oh and Ahn \cite{ref:OhA16} solved the problem in $O(n^2 \log^3 n)$ time and $O(n)$ space.

The DOAT problem is to minimize the diameter. The problem of minimizing the radius was also considered. For the case where $T$ is a path embedded in a metric space, Johnson and Wang~\cite{ref:JohnsonAL19} presented a linear time algorithm which adds a shortcut to $T$ so that the radius of the resulting graph is minimized. The radius considered in~\cite{ref:JohnsonAL19} is defined with respect to all points of $T$, not just the vertices. Wang and Zhao \cite{ref:WangAl20} studied the same problem with radius defined with respect to only the vertices, and they
gave a linear time algorithm.

The more general problem in which one wants to add $k$ shortcuts to a graph to minimize the diameter is NP-hard~\cite{ref:SchooneDi97} and some variations are even W[2]-hard~\cite{ref:FratiAu15, ref:GaoTh13}.
Approximation algorithms have been proposed~\cite{ref:BiloIm12, ref:ChepoiAu02, ref:DemaineMi10, ref:FratiAu15, ref:LiOn92}. The upper and lower bounds on the values of diameters of certain augmented graphs were also studied, e.g., \cite{ref:AlonDe00, ref:ChungDi84, ref:ToshimasaAu13}.
Bae et al. \cite{ref:BaeSh19} considered the problem of adding $k$ shortcuts to
a circle in the plane to minimize the diameter of the resulting graph.

\subsection{Our approach}
\label{subsec:OurResult}

To compute the diameter of a unicycle graph $G$, Oh and Ahn~\cite{ref:OhA16} reduces the problem to a geometric problem and then uses a one-dimensional range tree to solve the problem. We take a completely different approach. Let $C$ be the unique cycle of $G$. We define certain ``domination'' relations on the vertices of $C$ so that if a vertex $v$ is dominated by another vertex then $v$ is not important to the diameter. We then present a pruning algorithm to find all undominated vertices (and thus those dominated vertices are ``pruned''); it turns out that finding the diameter among the undominated vertices is fairly easy. In this way, we compute the diameter of $G$ in linear time.

For the DOAT problem on a tree $T$, Oh and Ahn~\cite{ref:OhA16} considered all possible shortcuts of $T$ by following an Euler tour of $T$; they used the aforementioned one-dimensional range tree to update the diameter for the next shortcut. Bil{\`o}'s method~\cite{ref:BiloAl18} is to transform the problem to adding a shortcut to a path whose edge lengths satisfy a property  similar in spirit to the triangle inequality (called graph-triangle inequality) and then the problem on $P$ can be solved by applying the $O(n\log n)$ time algorithm for trees in metric space~\cite{ref:BiloAl18}. Unfortunately, the problem transformation algorithm relies on using $O(n^2)$ space to store the lengths of all possible $\Theta(n^2)$ shortcuts of $T$. The algorithm has to consider all these $\Theta(n^2)$ shortcut lengths in a global manner and thus it inherently uses $\Omega(n^2)$ space. Note that Bil{\`o}'s method~\cite{ref:BiloAl18} does not need an algorithm for computing the diameter of a unicycle graph.

We propose a novel approach. We first compute a diametral path $P$ of $T$. Then we reduce the DOAT problem on $T$ to finding a shortcut for $P$. To this end, we consider vertices of $P$ individually. For each vertex $v_i$ of $P$, we want to find an optimal shortcut with the restriction that it must connect $v_i$, dubbed a {\em $v_i$-shortcut}. For this, we define a ``domination'' relation on all $v_i$-shortcuts and we show that those shortcuts dominated by others are not important. We then design a pruning algorithm to find all shortcuts that are not dominated by others; most importantly, these undominated shortcuts have certain monotonicity properties that allow us to perform binary search to find an optimal $v_i$-shortcut by using our diameter algorithm for unicycle graphs as a subroutine. With these effort, we find an optimal $v_i$-shortcut in $O(n\log n)$ time and $O(n)$ space. The space can be reused for computing optimal $v_i$-shortcuts of other vertices of $P$. In this way, the total time of the algorithm is $O(n^2\log n)$ and the space is $O(n)$.

\subparagraph{Outline.}
In the following, we present our algorithm for computing the diameter of a unicycle graph in Section~\ref{sec:ComputingTheDiameterOfAUnicylicGraph}. Section \ref{sec:DOAPForATree} is concerned with the DOAT problem.


\section{Computing the Diameter of Unicycle Graphs}
\label{sec:ComputingTheDiameterOfAUnicylicGraph}

In this section, we present our linear time algorithm for computing the diameter of unicycle graphs.

For a subgraph $G'$ of a graph $G$ and two vertices $u$ and $v$ from $G'$, we use $\pi_{G'}(u,v)$ to denote a shortest path from $u$ to $v$ in $G'$ and use $d_{G'}(u,v)$ to denote the length of the path.
We use $\Delta(G)$ to denote the diameter of $G$. A pair of vertices $(u,v)$ is called a {\em diametral pair} and $\pi_G(u,v)$ is called a {\em diametral path} if $d_G(u,v)=\Delta(G)$.

In the following, let $G$ be a unicycle graph of $n$ vertices. Our goal is to compute the diameter $\Delta(G)$ (along with a diametral pair). Let $C$ denote the unique cycle of $G$.

\subsection{Observations}

Removing all edges of $C$ (while keeping its vertices) from $G$ results in several connected components of $G$. Each component is a tree that contains a vertex $v$ of $C$; we use $T(v)$ to denote the tree. Let $v_1,v_2,\ldots,v_m$ be the vertices ordered clockwise on $C$. Let $\calT(G)=\{T(v_i)\ |\ 1\leq i\leq m\}$. Note that the sets of vertices of all trees of $\calT(G)$ form a partition of the vertex set of $G$.

Consider a diametral pair $(u^*,v^*)$ of $G$. There are two cases: (1) both $u^*$ and $v^*$ are in the same tree of $\calT(G)$; (2) $u^*$ and $v^*$ are in two different trees of $\calT(G)$. To handle the first case, we compute the diameter of each tree of $\calT(G)$, which can be done in linear time. Computing the diameters for all trees takes $O(n)$ time. The longest diameter of these trees is the diameter of $G$. In the following, we focus on the second case.

Suppose $T(v_i)$ contains $u^*$ and $T(v_j)$ contains $v^*$ for $i\neq j$. Observe that the diametral path $\pi_G(u^*,v^*)$ is the concatenation of the following three paths: $\pi_{T(v_i)}(u^*,v_i)$, $\pi_C(v_i,v_j)$, and $\pi_{T(v_j)}(v_j,v^*)$. Further, $u^*$ is the farthest vertex in $T(v_i)$ from $v_i$; the same holds for $v^*$ and $T(v_j)$. On the basis of these observations, we introduce some concepts as follows.



For each vertex $v_i\in C$, we define
a {\em weight} $w(v_i)$ as the length of the path from $v_i$ to its farthest vertex in $T(v_i)$. 
The weights for all vertices on $C$ can be computed in total $O(n)$ time.
With this definition in hand, we have $\Delta(G)=\max_{1\leq i< j\leq m}(w(v_i) + d_C(v_i,v_j) + w(v_j))$. We say that $(v_i,v_j)$ is a {\em vertex-weighted diametral pair} of $C$ if $T(v_i)$ contains $u^*$ and $T(v_j)$ contains $v^*$ for a diametral pair $(u^*,v^*)$ of $G$. To compute $\Delta(G)$, it suffices to find a vertex-weighted diameter pair of $C$.

We introduce a domination relation for vertices on $C$.
\begin{definition}
    \label{def:domination-cycle}
    For two vertices $v_{i}, v_{j} \in C$, we say that $v_{i}$ {\em dominates} $v_{j}$ if $w(v_i) > w(v_j) + d_{C}(v_{i}, v_{j})$.
\end{definition}

The following lemma shows that if a vertex is dominated by another vertex, then it is not ``important''.


\begin{lemma}
    \label{lemma:1}
    For two vertices $v_i$ and $v_j$ of $C$, if $v_{i}$ dominates $v_{j}$, then $v_{j}$ cannot be in any  vertex-weighted diametral pair of $C$ unless $(v_i,v_j)$ is such a pair.
\end{lemma}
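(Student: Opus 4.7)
The plan is to argue by contradiction. Suppose that $(v_j,v_k)$ is a vertex-weighted diametral pair of $C$ with $k\neq i$; the goal is to contradict the hypothesis $w(v_i)>w(v_j)+d_C(v_i,v_j)$. The intuition is simple: if the weight of $v_i$ strictly exceeds that of $v_j$ plus the cycle distance between them, then routing from $T(v_i)$ to $T(v_k)$ through $v_i$ should produce a $G$-distance strictly larger than $\Delta(G)$, which is impossible.

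To make this precise, first read off the diameter. Since $(v_j,v_k)$ is vertex-weighted diametral, the decomposition of a diametral path into three pieces observed just before Definition~\ref{def:domination-cycle} yields $\Delta(G)=w(v_j)+d_C(v_j,v_k)+w(v_k)$. Next, let $u'\in T(v_i)$ be a vertex farthest from $v_i$ in $T(v_i)$ and let $v^*\in T(v_k)$ be a vertex farthest from $v_k$ in $T(v_k)$, so $d_{T(v_i)}(u',v_i)=w(v_i)$ and $d_{T(v_k)}(v_k,v^*)=w(v_k)$. Because $T(v_i)\cap C=\{v_i\}$ and $T(v_k)\cap C=\{v_k\}$, every walk in $G$ from $u'$ to $v^*$ must pass through $v_i$ and then $v_k$, so $d_G(u',v^*)=w(v_i)+d_C(v_i,v_k)+w(v_k)$. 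Using $d_G(u',v^*)\leq\Delta(G)$ and cancelling $w(v_k)$ gives $w(v_i)+d_C(v_i,v_k)\leq w(v_j)+d_C(v_j,v_k)$. Applying the triangle inequality $d_C(v_j,v_k)\leq d_C(v_i,v_j)+d_C(v_i,v_k)$ on the cycle metric and cancelling $d_C(v_i,v_k)$ then yields $w(v_i)\leq w(v_j)+d_C(v_i,v_j)$, contradicting domination.

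Hence $k=i$ must hold, in which case $(v_j,v_k)$ equals $(v_i,v_j)$, exactly the exception allowed by the lemma. I do not anticipate a genuinely hard step, since the argument is just three inequalities chained together. The one point worth verifying carefully is the equality (rather than mere inequality) $d_G(u',v^*)=w(v_i)+d_C(v_i,v_k)+w(v_k)$; this uses the structural fact that $v_i$ and $v_k$ are the unique vertices connecting $T(v_i)$ and $T(v_k)$ to the rest of $G$, which is immediate from the definition of the trees as components of $G$ with its cycle edges removed.
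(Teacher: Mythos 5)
Your proof is correct and follows essentially the same route as the paper's: both assume $v_j$ lies in a vertex-weighted diametral pair with some $v_k$, $k\neq i$, and combine the triangle inequality $d_C(v_j,v_k)\leq d_C(v_i,v_j)+d_C(v_i,v_k)$ with the fact that $w(v_i)+d_C(v_i,v_k)+w(v_k)\leq \Delta(G)$ to contradict domination. The only cosmetic difference is that you justify that last bound by exhibiting the concrete farthest vertices $u'$ and $v^*$, while the paper simply invokes its formula $\Delta(G)=\max_{i<j}(w(v_i)+d_C(v_i,v_j)+w(v_j))$ and arranges the same inequalities in the opposite order.
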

\begin{proof}
    \label{proof:1}
    We assume that $(v_i,v_j)$ is not a vertex-weighted diametral pair.
        Assume to the contrary that $(v_{k}, v_{j})$ is a vertex-weighted diametral pair of $C$. Then, $k\neq i$ and $\Delta(G) = w(v_{k}) + d_{C}(v_{k}, v_{j}) + w(v_{j})$.
        Note that $d_{C}(v_{k}, v_{j}) \leq d_{C}(v_{k}, v_{i}) + d_{C}(v_{i}, v_{j})$ holds.
        Since $v_{i}$ dominates $v_{j}$, we have $w(v_{i}) > w(v_{j}) + d_{C}(v_{i}, v_{j})$.
        Consequently, we can derive
    \begin{align*}
        \Delta(G) &= w(v_{k}) + d_{C}(v_{k}, v_{j}) + w(v_{j}) \\
        &\leq w(v_{k}) + d_{C}(v_{k}, v_{i}) + d_{C}(v_{i}, v_{j}) + w(v_{j})\\
        &< w(v_{k}) + d_{C}(v_{k}, v_{i}) + w(v_{i}).
    \end{align*}
      But this contradicts with the definition of $\Delta(G)$. The lemma thus follows.
\end{proof}


\subsection{A pruning algorithm}

In the sequel, we describe a linear time {\em pruning algorithm} to find all vertices of $C$ that are dominated by other vertices (and thus those dominated vertices are ``pruned''). As will be seen later, the diameter can be easily found after these vertices are pruned.

\begin{figure}[t]
    \centering
    \includegraphics[width=1.8in]{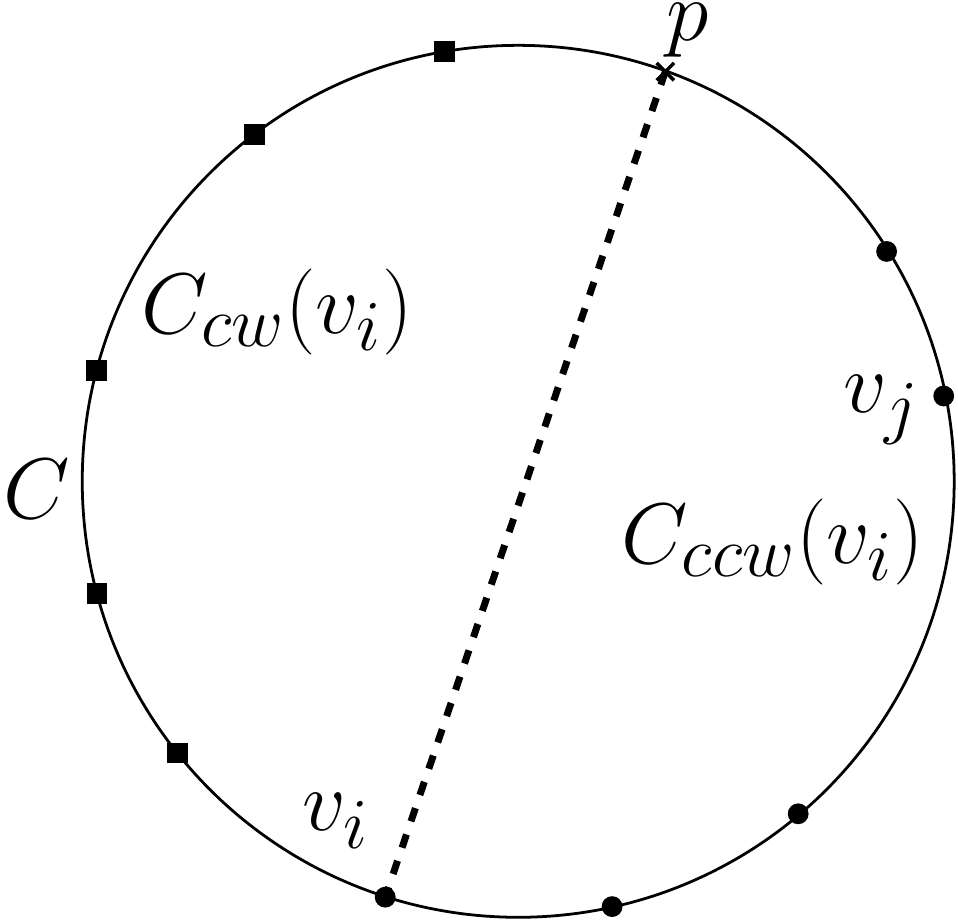}
    \caption{Illustrating the definitions of $C_{ccw}(v_i)$ (the disks except $v_i$) and $C_{cw}(v_i)$ (the squares). We assume that $p$ is a point on $C$ that together with $v_i$ partitions $C$ into two half-cycles of equal length.}
    \label{fig:VertexWeightedCycleRemoved}
\end{figure}

Let $|C|$ denote the sum of the lengths of all edges of $C$.
For any vertex $v_i$ of $C$, define $C_{ccw}(v_i)$ as the set of vertices $v_j$ of $C$ such that the path from $v_i$ to $v_j$ counterclockwise along $C$ has length at most $|C|/2$ (e.g., see Fig~\ref{fig:VertexWeightedCycleRemoved}); define $C_{cw}(v_i)$ as the set of vertices of $C$ not in $C_{ccw}(v_i)$. We assume that $v_i$ is in neither subset.

\begin{lemma}\label{lem:preprocess}
With $O(n)$ time preprocessing, given any two vertices $v_i$ and $v_j$ of $C$, we can do the following in $O(1)$ time: (1) compute $d_C(v_i,v_j)$; (2) determine whether $v_j$ is in $C_{ccw}(v_i)$; (3) determine whether $v_i$ and $v_j$ dominate each other.
\end{lemma}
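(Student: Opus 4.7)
The plan is to precompute, in an initial $O(n)$ pass over the cycle $C$, two arrays of prefix sums and the scalar $|C|$, and then answer each of the three queries by a constant number of arithmetic comparisons. Fix the clockwise order $v_1,v_2,\ldots,v_m$ on $C$ and, for each $i$, let $s_i$ denote the clockwise distance from $v_1$ to $v_i$ along $C$ (set $s_1=0$). Computing all $s_i$ together with $|C|$ takes $O(n)$ time by a single traversal of $C$. I also assume the weights $w(v_i)$ have already been precomputed in $O(n)$ time as discussed just before the lemma, and that the indices $1,\ldots,m$ of the vertices of $C$ are stored with the vertices so that given $v_i$ and $v_j$ we have $O(1)$ access to $s_i$, $s_j$, $w(v_i)$, $w(v_j)$.

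Given this preprocessing, I would answer each query as follows. For (1), the clockwise distance from $v_i$ to $v_j$ along $C$ is $s_j - s_i$ if $s_j\geq s_i$ and $|C|-(s_i-s_j)$ otherwise; call this value $a$. Then the counterclockwise distance is $|C|-a$, so $d_C(v_i,v_j)=\min(a,|C|-a)$, which is clearly $O(1)$. For (2), $v_j\in C_{ccw}(v_i)$ iff the counterclockwise distance $|C|-a$ from $v_i$ to $v_j$ is at most $|C|/2$; equivalently, $a\geq |C|/2$. This is a single $O(1)$ comparison. For (3), once $d_C(v_i,v_j)$ is obtained by (1), checking whether $v_i$ dominates $v_j$ amounts to testing $w(v_i)>w(v_j)+d_C(v_i,v_j)$, and symmetrically for the other direction, per Definition~\ref{def:domination-cycle}; this is again $O(1)$.

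There is no real obstacle in this lemma; the only things worth being careful about are the conventions. One must fix which direction is clockwise once and for all so that the formula $s_j-s_i$ (modulo $|C|$) consistently gives the clockwise arc length from $v_i$ to $v_j$, and one must handle the boundary case ``counterclockwise length equals $|C|/2$'' consistently with the definition of $C_{ccw}(v_i)$ (which uses ``at most'' and hence includes this case). Both are handled by the comparisons written above, so the preprocessing cost is $O(n)$ and each of the three queries runs in $O(1)$, proving the lemma.
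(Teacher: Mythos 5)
Your proof is correct and matches the paper's own argument: both precompute the weights, the total length $|C|$, and an array of clockwise prefix distances from $v_1$, then answer all three queries by constant-time arithmetic on these values. The only cosmetic difference is that you handle the wraparound case $s_j<s_i$ explicitly whereas the paper assumes $i<j$ without loss of generality.
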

\begin{proof}
We first compute the weight $w(v_i)$ for all vertices $v_i\in C$. This can be done in $O(n)$ time. Then, we compute the length $|C|$. Next, by scanning the vertices $v_1,v_2,\ldots, v_m$ on $C$, we compute an array $A[1,\ldots,m]$ with $A[i]$ equal to the length of the path from $v_1$ to $v_i$ clockwise along $C$. Hence, for any $1\leq i<j\leq m$, $A[j]-A[i]$ is the length of the path from $v_i$ to $v_j$ clockwise along $C$ and $|C|-(A[j]-A[i]))$ is the length of the path from $v_i$ to $v_j$ counterclockwise along $C$. Note that $d_C(v_i,v_j)=\min\{A[j]-A[i],|C|-(A[j]-A[i])\}$.

Consider any two vertices $v_i$ and $v_j$ of $C$. Without loss of generality, we assume $i<j$. By comparing $A[j]-A[i]$ with $|C|/2$, we can determine whether $v_j$ is in $C_{ccw}(v_i)$ in $O(1)$ time. As $w(v_i)$ and $w(v_j)$ are both available, whether $v_i$ and $v_j$ dominate each other can be determined in $O(1)$ time.
\end{proof}

With Lemma~\ref{lem:preprocess} in hand, starting from $v_1$, our pruning algorithm processes the vertices of $C$ from $v_1$ to $v_m$ in order (see Algorithm~\ref{algorithm:UnicyclicGraph} for the pseudocode). The algorithm maintains a stack $S$, which is empty initially. Consider a vertex $v_i$. If $S=\emptyset$, then we push $v_i$ into $S$. Otherwise, let $v$ be the vertex at the top of $S$. If $v$ is not in $C_{ccw}(v_i)$, then we also push $v_i$ into $S$. Otherwise, we check whether $v$ and $v_i$ dominate each other. If they do not dominate each other, then we push $v_i$ into $S$.
Otherwise, if $v_i$ dominates $v$, we pop $v$ out of $S$, and then we continue to pop the new top element $v$ of $S$ out as long as the following three conditions are all satisfied: (1) $S\neq \emptyset$; (2) $v\in C_{ccw}(v_i)$; (3) $v_i$ dominates $v$. Once one of the three conditions is not satisfied, we push $v_i$ into $S$.

After $v_m$ is processed, the first stage of the pruning algorithm is over. In the second stage, we process the vertices in the stack $S$ in a bottom-up manner until a vertex not in $C_{cw}(v_1)$; the processing of a vertex is done in the same way as above (the vertex should be removed from $S$ first). Specifically, let $v_i$ be the vertex at the bottom of $S$. If $v_i$ is not in $C_{cw}(v_1)$, then we stop the algorithm and return the vertices in the current stack $S$. Otherwise, we remove $v_i$ from $S$ and then apply the same processing algorithm as above in the first stage (i.e., begin with checking whether $S$ is empty).

Intuitively, the first stage of the algorithm does a ``full-cycle'' scan on $C$ while the second stage does a ``half-cycle'' scan (i.e., the half-cycle clockwise from $v_1$). With Lemma~\ref{lem:preprocess}, the algorithm can be implemented in $O(n)$ time. The following lemma establishes the correctness of the algorithm.

   \begin{algorithm}[htbp]
    \DontPrintSemicolon
    \caption{The pruning algorithm}
    \label{algorithm:UnicyclicGraph}
    \SetKwFunction{FMain}{Main}
    \SetKwFunction{FProcessVertex}{ProcessVertex}

    \SetKwProg{Fn}{Function}{:}{end}

    \Fn{\FProcessVertex{$v_i$, $S$}}
    {
        \If{$S == \emptyset$}
        {
            $S.\mathbf{push}(v_i)$ \tcp{Push vertex $v_i$ into the stack}
        }
        \Else{
        $v\leftarrow S.\mathbf{top}()$ \tcp{$S.\mathbf{top}()$ is the top element of the stack}
        \If{$v \notin C_{ccw}(v_i)$}
        {
            $S.\mathbf{push}(v_i)$\;
        }
        \ElseIf{$v_i$ and $v$ do not dominate each other}
        {
            $S.\mathbf{push}(v_i)$\;
        }
        \ElseIf{$v_i$ dominates $v$}
        {
            $S.\mathbf{pop}()$ \tcp{Pop the top element out of $S$}
            \While{$S \neq \emptyset$ $\mathbf{and}$ $S.\mathbf{top}() \in C_{ccw}(v_i)$ $\mathbf{and}$ $v_i$ dominates $S.\mathbf{top}()$}
            {
                $S.\mathbf{pop}()$
            }
            $S.\mathbf{push}(v_i)$\;
        }
        }

    }

    \Fn{\FMain{$S = \emptyset$, $C$}}
    {
        \tcp{the full-cycle scan}
        $S=\emptyset$\;
        \For {$i = 1, 2, ..., m$}
        {
            ProcessVertex($v_{i}$, $S$) \tcp{Call ProcessVertex function on $v_i$}
        }
        \tcp{the half-cycle scan}
        $v_i\leftarrow $ the bottom element of $S$\;
        \While{$v_i\in C_{cw}(v_1)$}
        {
           Remove $v_i$ from $S$\;
           ProcessVertex($v_{i}$, $S$)\;
           $v_i\leftarrow $ the bottom element of $S$\;
        }
        \Return $S$\;
    }

    \end{algorithm}

\begin{lemma}\label{lem:pruning}
Let $S$ be the stack after the algorithm is over.
\begin{enumerate}
    \item Each vertex of $C$ that is not in $S$ is dominated by a vertex in $S$.
    \item No two vertices of $S$ dominate each other.
\end{enumerate}
\end{lemma}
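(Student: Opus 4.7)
For Part~1, I plan a ``chain of dominators'' argument based on the triangle inequality for cycle distances. Whenever the call \textsc{ProcessVertex}$(v',S)$ pops a vertex $v$ from the top of $S$, the popping condition in the pseudocode enforces $v\in C_{ccw}(v')$ and $w(v')>w(v)+d_C(v,v')$, i.e., $v'$ dominates $v$. If $v'$ is not in the final $S$, then $v'$ is popped later by some $v''$; combining $w(v'')>w(v')+d_C(v',v'')$, $w(v')>w(v)+d_C(v,v')$, and the triangle inequality $d_C(v,v'')\le d_C(v,v')+d_C(v',v'')$ yields $w(v'')>w(v)+d_C(v,v'')$, so $v''$ also dominates $v$. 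Iterating along this finite strictly-increasing chain of successive poppers terminates at a vertex of the final $S$ that dominates $v$.

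For Part~2, I first prove the adjacent case: let $u$ lie immediately below $u'$ in the final $S$. A LIFO argument shows that $u$ was the top of $S$ exactly when $u'$ was last pushed---otherwise some element strictly between them at that instant would have to be popped later, which is impossible while $u'$ remains above it and is never popped. Therefore one of the admissible push branches of \textsc{ProcessVertex} fired. The branch ``$u'$ dominates and pops $u$'' contradicts $u\in S$; the branch ``neither $u,u'$ dominates the other'' is exactly the desired conclusion; the only remaining possibility is ``$u\notin C_{ccw}(u')$''. To rule this out I plan to invoke the stage-2 termination criterion, which forces the bottom of $S$ to leave $C_{cw}(v_1)$. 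A case analysis on whether $u'$ received its last push in stage~1 or in stage~2, combined with the clockwise scan order of stage~1 and the wrap-around re-processing of stage~2, shows that the elements of the final $S$ occur in clockwise cyclic order on $C$ and that each consecutive stack pair lies within a common half-cycle, i.e., $u\in C_{ccw}(u')$.

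To lift the adjacent statement to arbitrary pairs in $S$, I will assume for contradiction that $u_b$ dominates $u_a$ with $b>a+1$ and propagate the domination to an adjacent pair. The key algebraic step is to substitute the adjacent-case bound $w(u_a)\ge w(u_{a+1})-d_C(u_a,u_{a+1})$ into $w(u_b)>w(u_a)+d_C(u_a,u_b)$ and simplify using the telescoping identity $d_C(u_a,u_b)=d_C(u_a,u_{a+1})+d_C(u_{a+1},u_b)$ to obtain $w(u_b)>w(u_{a+1})+d_C(u_{a+1},u_b)$; iterating then yields $u_b$ dominating $u_{b-1}$, contradicting the adjacent conclusion. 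The main technical obstacle is justifying the telescoping identity, i.e., showing that the intermediate stack elements actually lie on the shortest arc between $u_a$ and $u_b$ on $C$. This step will rely on the cyclic-consecutive arrangement and half-cycle properties of $S$ established in the adjacent case, together with a short analysis that selects the correct arc direction (clockwise versus counterclockwise) along which the telescoping holds.
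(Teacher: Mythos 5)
Your Part~1 is essentially the paper's own argument: a chain of dominators closed off by the first transitivity property (which needs only the triangle inequality for $d_C$, so no half-cycle condition). Just make sure the chain also starts correctly for a vertex that is \emph{never pushed} (the implicit ``else'' branch, where the current stack top dominates the vertex being processed), not only for vertices that are pushed and later popped; the extension is immediate.

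Part~2 is where your plan genuinely diverges from the paper and where it has gaps. First, your trichotomy of push branches for an adjacent pair $u$ (below) and $u'$ (above) is incomplete: if $u'$ is pushed via the branch that pops one or more elements, the pop loop exits at $u$ merely because ``$u'$ does not dominate $u$'' (or because $u\notin C_{ccw}(u')$); this does not rule out $u$ dominating $u'$. Excluding that direction requires transitivity through the element just popped, which in turn requires knowing that $u$ and that popped element did not dominate each other \emph{at that intermediate moment}. In other words, you need an invariant maintained throughout the run of the algorithm --- the paper's inductive claim that at every moment any two stack elements $v$ above $u$ with $u\in C_{ccw}(v)$ do not dominate each other --- and a final-stack-only adjacent-pair analysis has no substitute for it. Second, the unconditional statement you want to prove, that adjacent elements of the final stack satisfy $u\in C_{ccw}(u')$, is false: at the seam between the stage-1 survivors (large indices, at the bottom) and the vertices re-pushed in stage 2 (small indices, on top), the lower element can lie more than half a cycle counterclockwise from the upper one, and the algorithm then pushes with no domination check at all. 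The paper copes with exactly this by keeping its invariant \emph{conditional} on the half-cycle relation and, for a pair in the ``wrong'' orientation at the end, applying the invariant at the earlier moment (in stage 1) when the two vertices sat in the stack in the opposite order and the orientation was right. Third, your telescoping step for non-adjacent pairs needs $d_C(u_a,u_b)=d_C(u_a,u_{a+1})+d_C(u_{a+1},u_b)$, i.e., the intermediate stack elements must lie on the shortest arc from $u_a$ to $u_b$; this is precisely what fails when the pair spans more than half the cycle along the stack order, the same regime in which your half-cycle claim fails. The paper never telescopes: its invariant already applies to arbitrary (not just adjacent) stack pairs in the correct orientation, so non-adjacent pairs come for free.
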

\begin{proof}
We first proves an observation about
two transitive properties of the domination relation, which will be used to prove the lemma.
\subparagraph{\em Observation.}
{\em    Let $v_{i}, v_{j}, v_{k}$ be any three vertices of $C$.
    \begin{enumerate}
        \item If $v_{i}$ dominates $v_{j}$ and $v_{j}$ dominates $v_{k}$, then $v_{i}$ dominates $v_{k}$.
        \item If $v_{i}$ and $v_{j}$ do not dominate each other, $v_{j}$ and $v_{k}$ do not dominate each other, and $d_{C}(v_{i}, v_{k}) = d_{C}(v_{i}, v_{j}) + d_{C}(v_{j}, v_{k})$, then $v_{i}$ and $v_{k}$ do not dominate each other.
    \end{enumerate} }
\subparagraph{Proof of observation.}
    \begin{enumerate}
        \item Since $v_{i}$ dominates $v_{j}$ and $v_{j}$ dominates $v_{k}$, we have $w(v_{i}) > w(v_{j}) + d_{C}(v_{i}, v_{j})$ and $w(v_{j}) > w(v_{k}) + d_{C}(v_{j}, v_{k})$. Thus, $w(v_{i}) > w(v_{k}) + d_{C}(v_{j}, v_{k}) + d_{C}(v_{i}, v_{j}) \geq w(v_{k}) + d_{C}(v_{i}, v_{k})$. Hence, $v_{i}$ dominates $v_{k}$.

        \item
        As $v_{i}$ and $v_{j}$ do not dominate each other, $w(v_{i}) \leq w(v_{j}) + d_{C}(v_{i}, v_{j})$. As $v_{j}$ and $v_{k}$ do not dominate each other, $w(v_{j}) \leq w(v_{k}) + d_{C}(v_{j}, v_{k})$.
        Since $d_{C}(v_{i}, v_{k}) = d_{C}(v_{i}, v_{j}) + d_{C}(v_{j}, v_{k})$, we have
        \begin{align*}
            w(v_{i}) &\leq w(v_{j}) + d_{C}(v_{i}, v_{j}) \leq w(v_{k}) + d_{C}(v_{j}, v_{k}) + d_{C}(v_{i}, v_{j}) = w(v_{k}) + d_{C}(v_{i}, v_{k}),
        \end{align*}
        and
        \begin{align*}
            w(v_{k}) \leq w(v_{j}) + d_{C}(v_{j}, v_{k}) \leq w(v_{i}) + d_{C}(v_{i}, v_{j}) + d_{C}(v_{j}, v_{k}) = w(v_{i}) + d_{C}(v_{i}, v_{k}).
        \end{align*}
        Therefore, $v_{i}$ and $v_{k}$ do not dominate each other.
    \end{enumerate}
    This proves the observation.
\medskip

We are now in a position to prove the lemma.

\subparagraph{Proving the first lemma statement.}
We start with the first lemma statement. Consider a vertex $v_i$ of $C$ that is not in $S$. According to our algorithm, $v_i$ may or may not be processed in the second stage. If $v_i$ is processed in the second stage, then $v_i$ was pushed into $S$ during the first stage but is removed from $S$ in the second stage (hence $v_i$ was processed twice in the algorithm). If $v_i$ is not processed in the second stage, then $v_i$ was not in $S$ at the end of the first stage (hence $v_i$ is processed only once in the algorithm). In either case, $v_i$ must be dominated by a vertex $v_j$ that was in $S$. If $v_j$ is still in $S$ at the end of the algorithm, then the first lemma statement is proved; otherwise, we can prove inductively that $v_j$ is dominated by a vertex $v_k$ in $S$. By the above Observation, $v_i$ is dominated by $v_k$ and thus the first lemma statement follows.

\subparagraph{Proving the second lemma statement.}
We next prove the second lemma statement.
   We first prove a {\bf claim}: at any moment during the algorithm, for any two vertices $v$ and $u$ of $S$ such that $v$ is above $u$ in the stack $S$ and $u\in C_{ccw}(v)$, $v$ and $u$ do not dominate each other. We use mathematical induction to prove it, as follows.

    The claim is vacuously true in the beginning of the algorithm because $S=\emptyset$. 
    We assume that the claim holds on $S$ right before a vertex $v_i$ is processed. We show below that the claim still holds on $S$ after $v_i$ is processed. We first consider the processing of $v_i$ in the first stage of the algorithm. Let $S$ refer to the stack right before $v_i$ is processed. Let $v$ be the top element of $S$ if $S\neq \emptyset$. According to our algorithm, $v_i$ is pushed into $S$ in the following four cases.
    \begin{enumerate}
        \item $S = \emptyset$. In this case, $S=\{v_i\}$ after $v_i$ is processed. Hence, the claim trivially holds.
        \item $S \neq \emptyset$, and $v \notin C_{ccw}(v_i)$. In this case, $C_{ccw}(v_i) \cap S = \emptyset$, and thus the claim holds after $v_i$ is processed.
        \item $S \neq \emptyset$, $v \in C_{ccw}(v_i)$, and $v$ and $v_i$ do not dominate each other. Let $v_j$ be any vertex of $S$ such that $v_j\in C_{ccw}(v_i)$. To prove the lemma, it suffices to show that $v_j$ and $v_i$ do not dominate each other.

        Indeed, if $v_j=v$, then we know that $v$ and $v_i$ dot not dominate each other. Otherwise, $v_j$ is below $v$. As $v_j\in C_{ccw}(v_i)$, since $v\in C_{ccw}(v_i)$, $v_j$ is also in $C_{ccw}(v)$. Since $v$ is above $v_j$ in $S$, by the induction hypothesis $v_j$ and $v$ do not dominate each other. Since both $v_j$ and $v$ are in $C_{ccw}(v_i)$, $d_C(v_i,v_j)=d_C(v_i,v)+d_C(v,v_j)$ holds. By the above Observation, $v_i$ and $v_j$ do not dominate each other. Hence, the claim holds after $v_i$ is processed.

        \item $S \neq \emptyset$, $v \in C_{ccw}(v_i)$, and $v_i$ dominates $v$. In this case, $v_i$ is pushed into $S$ after some vertices including $v$ are popped out of $S$. First of all, since the claim holds on $S$, after vertices popped out of $S$, the claim still holds on the new $S$. Let $S$ refer to the stack right before $v_i$ is pushed in. Hence, the claim holds on $S$.

        If $S=\emptyset$, then the claim still holds after $v_i$ is pushed in since $v_i$ will be the only vertex in $S$. Otherwise, let $v'$ be the top element of $S$. If $v'\not\in C_{ccw}(v_i)$, then the claim still holds on $S$ after $v_i$ is pushed in. Otherwise, $v_i$ does not dominate $v'$ and both $v$ and $v'$ are in $C_{ccw}(v_i)$, and thus $v'$ is in $C_{ccw}(v)$. By the induction hypothesis, $v$ and $v'$ do not dominate each other. As $v_i$ dominates $v$, $v'$ cannot dominate $v_i$ since otherwise $v'$ would dominate $v$ by the above Observation. Hence, $v_i$ and $v'$ do not dominate each other. By the same argument as the above third case, $v_j$ and $v_i$ do not dominate each other for any vertex $v_j$ of $S$ with $v_j\in C_{ccw}(v_i)$. Hence, the claim holds after $v_i$ is pushed into $S$.
    \end{enumerate}

    The above proves that the claim still holds after $v_i$ is processed in the first stage of the algorithm. Now consider processing $v_i$ in the second stage. If $v_i\not\in C_{cw}(v_1)$, then the algorithm stops without changing $S$ and thus the claim still holds on $S$. Otherwise, $v_i$ is removed from $S$, after which the claim still holds on $S$. Next, the algorithm processes $v_i$ in the same way as in the first stage and thus we can use the same argument as above to prove that the claim still holds after $v_i$ is processed.
    This proves the claim.

    \medskip
    In the sequel we prove the second lemma statement by using the claim.

    Consider two vertices $v_i$ and $v_j$ in $S$ at the end of the algorithm. Notice that either $v_i\in C_{ccw}(v_j)$ or $v_j\in C_{ccw}(v_i)$. Without loss of generality, we assume that the former case holds. Our goal is to show that $v_i$ and $v_j$ do not dominate each other.

    If $v_j$ is above $v_i$ in $S$ at the end of the algorithm, then by the above claim, $v_j$ and $v_i$ do not dominate each other. Otherwise, according to our algorithm, $v_i$ must be last pushed into $S$ in the second stage while $v_j$ must be last pushed into $S$ in the first stage. Further, $v_i$ was also pushed into $S$ in the first stage before $v_j$ was processed and $v_i$ was never popped out of $S$ in the first stage. Hence, at the moment of the algorithm right before $v_j$ was processed, $v_i$ was already in $S$. At the moment of the algorithm right after $v_j$ was processed, $v_j$ was at the top of $S$ and $v_i$ was also in $S$. Since $v_i\in C_{ccw}(v_j)$, by the above claim, $v_j$ and $v_i$ do not dominate each other.
    This proves the second lemma statement.
\end{proof}

\subsection{Computing the diameter}

In the following, we use $S$ to refer to the stack after the pruning algorithm. Note that $S$ cannot be empty. The following lemma shows how $S$ can help to find a vertex-weighted diametral pair of $C$.

\begin{lemma}\label{lem:sizeS}
If $|S|=1$, then any vertex-weighted diametral pair of $C$ must contain the only vertex in $S$. Otherwise, for any vertex $v$ of $C$ that is not in $S$, $v$ cannot be in any vertex-weighted diametral pair of $C$.
\end{lemma}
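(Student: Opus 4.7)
The plan is to prove both statements by combining the two parts of Lemma~\ref{lem:pruning} with the domination dichotomy provided by Lemma~\ref{lemma:1}. The only subtlety is the ``unless $(v_i,v_j)$ is such a pair'' escape clause in Lemma~\ref{lemma:1}, which is why the statement to be proved splits according to whether $|S|=1$.

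For the first statement, I would let $v^{*}$ be the unique element of $S$. By Lemma~\ref{lem:pruning}(1) every vertex $v\neq v^{*}$ of $C$ is dominated by some vertex of $S$, and this vertex can only be $v^{*}$. Then Lemma~\ref{lemma:1} says $v$ can appear in a vertex-weighted diametral pair only as the partner of $v^{*}$. Hence for any vertex-weighted diametral pair $(v_i,v_j)$, whichever endpoint (if any) lies outside $S$ must be paired with $v^{*}$, so $v^{*}$ itself belongs to the pair.

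For the second statement, let $v\notin S$. By Lemma~\ref{lem:pruning}(1) there is some $v'\in S$ that dominates $v$. By Lemma~\ref{lemma:1} the only pair in which $v$ could possibly appear as vertex-weighted diametral is $(v',v)$, so it suffices to rule this out. Since $|S|\ge 2$, I pick any $v''\in S$ with $v''\neq v'$. By Lemma~\ref{lem:pruning}(2), $v'$ and $v''$ do not dominate each other, so in particular $w(v'')+d_C(v',v'')\geq w(v')$, giving
\[
w(v')+d_C(v',v'')+w(v'')\;\geq\;2\,w(v').
\]
On the other hand, domination of $v$ by $v'$ gives $w(v)+d_C(v',v)<w(v')$, so the pair $(v',v)$ has total value
\[
w(v')+d_C(v',v)+w(v)\;<\;2\,w(v').
\]
Therefore $(v',v'')$ strictly beats $(v',v)$, meaning $(v',v)$ is not a vertex-weighted diametral pair. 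Combined with Lemma~\ref{lemma:1}, this shows $v$ cannot appear in any vertex-weighted diametral pair.

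The main conceptual obstacle is exactly the exception clause of Lemma~\ref{lemma:1}: we cannot conclude that $v$ is irrelevant merely from being dominated, because the dominator itself might still pair with $v$ to form a diametral pair. The key idea for overcoming this is that when $|S|\ge 2$, Lemma~\ref{lem:pruning}(2) guarantees a second surviving vertex $v''$ whose ``no-domination'' relation with $v'$ forces the $(v',v'')$ pair to have weight at least $2w(v')$, strictly exceeding the $(v',v)$ value of less than $2w(v')$; this is precisely why the single-vertex case must be singled out in the statement.
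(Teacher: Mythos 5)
Your proposal is correct and follows essentially the same route as the paper: in the $|S|=1$ case you use Lemma~\ref{lem:pruning}(1) plus the exception clause of Lemma~\ref{lemma:1} to force the surviving vertex into the pair, and in the $|S|\ge 2$ case you derive the same inequality $w(v')+d_C(v',v'')+w(v'')\geq 2w(v') > w(v')+d_C(v',v)+w(v)$ that the paper uses to rule out the pair $(v',v)$. No gaps.
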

\begin{proof}
Suppose $|S|=1$ and let $v$ be the only vertex in $S$. Then, by Lemma~\ref{lem:pruning}, every vertex of $C\setminus\{v\}$ is dominated by $v$. Let $(u^*,v^*)$ be a vertex-weighted diametral pair of $C$. At least one of $u^*$ and $v^*$ is not $v$. Without loss of generality, we assume $u^*\neq v$. Hence, $u^*$ is dominated by $v$. Since $(u^*,v^*)$ is a vertex-weighted diametral pair, by Lemma~\ref{lemma:1}, $v^*$ must be $v$. This proves the lemma for the case $|S|=1$.

Now assume $|S|>1$. Let $v$ be a vertex of $C$ that is not in $S$. By Lemma~\ref{lem:pruning}, $S$ has a vertex $u$ that dominates $v$.
Assume to the contrary that $v$ is in a vertex-weighted diametral pair. Then by Lemma~\ref{lemma:1}, the pair must be $(u,v)$. As $|S|>1$, $S$ has another vertex $u'$ that is not in $\{u,v\}$. Since $u$ dominates $v$, we have $w(u)>w(v)+d_C(u,v)$. Since $u$ and $u'$ do not dominate each other, we have $w(u)\leq w(u')+d_C(u,u')$. Consequently, we can derive
\begin{align*}
w(u)+d_C(u,u')+w(u') \geq w(u) + w(u) >w(u) + d_C(u,v) + w(v).
\end{align*}
But this incurs contradiction since $(u,v)$ is a vertex-weighted diametral pair.
\end{proof}

In light of Lemma~\ref{lem:sizeS}, if $|S|=1$, we compute the diameter $\Delta(G)$ as follows. Let $v$ be the only vertex in $S$. We find the vertex $u\in C\setminus\{v\}$ that maximizes the value $w(u)+d_C(u,v)+w(v)$, which can be done in $O(n)$ time with Lemma~\ref{lem:preprocess}. By Lemma~\ref{lem:sizeS}, $(u,v)$ is a vertex-weighted diametral pair and $\Delta(G)=w(u)+d_C(u,v)+w(v)$.

If $|S|>1$, by Lemma~\ref{lem:sizeS}, $\Delta(G)=\max_{u,v\in S}(w(u)+d_C(u,v)+w(v))$.
The following lemma finds a vertex-weighted diametral pair and thus computes $\Delta(G)$ in linear time.

\begin{lemma}\label{lem:pair}
A pair $(u,v)$ of vertices in $S$ that maximizes the value $w(u)+d_C(u,v)+w(v)$ can be found in $O(n)$ time.
\end{lemma}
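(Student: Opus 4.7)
The plan is to reduce the problem to a one-dimensional sliding-window maximum on an auxiliary array, exploiting the fact that on a cycle, the shorter arc between two vertices is one of only two candidate arcs. I would first extract the vertices of $S$ and list them in clockwise order along $C$ as $u_1, u_2, \ldots, u_k$; this is done in $O(n)$ time by marking the vertices of $S$ and scanning $v_1,\ldots,v_m$. Using the preprocessing from Lemma~\ref{lem:preprocess}, I would compute an array $A[1\ldots k]$ where $A[i]$ is the clockwise arc length from $u_1$ to $u_i$ (with $A[1]=0$), and have $|C|$ available. Then for any $i<j$, the two possible arcs between $u_i$ and $u_j$ have lengths $A[j]-A[i]$ and $|C|-(A[j]-A[i])$, and $d_C(u_i,u_j)$ is the smaller of the two.

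Next I would split the objective according to which arc is shorter. In Case~1, when $A[j]-A[i]\le |C|/2$, the quantity to maximize rewrites as $\bigl(w(u_j)+A[j]\bigr)+\bigl(w(u_i)-A[i]\bigr)$. For each $j$ swept from $1$ to $k$, the admissible $i$'s are those with $i<j$ and $A[i]\ge A[j]-|C|/2$; since $A[\cdot]$ is monotone, this admissible window of indices only advances to the right as $j$ grows. Hence the maximum of $w(u_i)-A[i]$ over the window can be maintained in amortized $O(1)$ per step using a standard monotone deque, yielding $O(n)$ total time for this case.

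In Case~2, when $A[j]-A[i]>|C|/2$, the objective rewrites as $|C|+\bigl(w(u_i)+A[i]\bigr)+\bigl(w(u_j)-A[j]\bigr)$. For each $j$, the valid $i$'s are exactly those with $A[i]<A[j]-|C|/2$, which form a prefix of $1,\ldots,j-1$ that only grows as $j$ increases. A running prefix maximum of $w(u_i)+A[i]$ therefore suffices, in $O(n)$ total time. Taking the better of the two cases (and breaking ties harmlessly when $A[j]-A[i]=|C|/2$, since both formulas agree there) gives a pair maximizing $w(u)+d_C(u,v)+w(v)$.

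The only delicate point is correctly identifying which of the two arcs is the shorter one for each candidate pair and arranging that the feasibility region of $i$ in each case is monotone in $j$; the linearization from $u_1$ together with monotonicity of $A[\cdot]$ makes this immediate, so the two sliding-window subroutines are routine. The entire procedure runs in $O(n)$ time as required.
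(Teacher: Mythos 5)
Your proof is correct, but it takes a genuinely different route from the paper's. The paper's argument leans on Lemma~\ref{lem:pruning}(2): because no two vertices of $S$ dominate each other, for a fixed $u\in S$ the value $w(u)+d_C(u,v)+w(v)$ is monotonically increasing as $v$ moves away from $u$ within each half-cycle $C_{cw}(u)$ and $C_{ccw}(u)$, so the maximum over $v$ is attained at the farthest vertex of $S$ in each half-cycle; these two extremal vertices advance monotonically in clockwise order as $u$ does, so a circular two-pointer sweep finds them all in $O(n)$ total time. You instead linearize the cycle via prefix sums, split according to which of the two arcs realizes $d_C(u_i,u_j)$, and reduce each case to a sliding-window maximum (monotone deque) or a running prefix maximum of $w(u_i)\pm A[i]$. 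Both are valid $O(n)$ algorithms, and your window/prefix monotonicity claims check out since $A[\cdot]$ is nondecreasing. The notable difference is that your argument never uses the no-domination property of $S$ at all: it computes $\max_{i<j}\bigl(w(v_i)+d_C(v_i,v_j)+w(v_j)\bigr)$ over an \emph{arbitrary} weighted vertex set on the cycle in linear time. The paper's approach buys a simpler implementation (two pointers, no deque) but is only valid on the pruned stack $S$; yours buys generality, and in fact could be applied directly to all of $C$, which would make the entire pruning algorithm and Lemmas~\ref{lemma:1}--\ref{lem:sizeS} unnecessary for computing the diameter of a unicycle graph.
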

\begin{proof}
Consider a vertex $u \in S$. Let $v_i$ and $v_j$ be two vertices in $S\cap C_{cw}(u)$ such that $d_C(u,v_i)< d_C(u,v_j)$ (e.g., see Fig.~\ref{fig:pair}). Note that $d_{C}(u, v_{j})=d_{C}(u, v_{i})+d_{C}(v_i, v_{j})$.
We claim that $w(u)+d_C(u,v_i)+w(v_i)\leq w(u)+d_C(u,v_j)+w(v_j)$. Indeed, by Lemma~\ref{lem:pruning}(2), $v_i$ and $v_j$ do not dominate each other. Hence,
$w(u) + d_{C}(u, v_{i}) + w(v_{i}) \leq w(u)+ d_{C}(u, v_{i}) + w(v_{j}) + d_{C}(v_{i}, v_{j}) = w(u) + d_{C}(u, v_{j}) + w(v_{j})$. The claim follows.
The claim implies that if we consider the vertices $v$ of $S\cap C_{cw}(u)$ from $u$ along $C$ in clockwise order, then the value $w(u)+d_C(u,v)+w(v)$ is monotonically increasing. Similarly, if we consider the vertices $v$ of $S\cap C_{ccw}(u)$ from $u$ along $C$ in counterclockwise order, then the value $w(u)+d_C(u,v)+w(v)$ is monotonically increasing. Let $u^{cw}$ (resp., $u^{ccw}$) refer to the farthest vertex from $u$ in $C_{cw}(u)$ (resp., $C_{ccw}(u)$); e.g., see Fig.~\ref{fig:pair}.
Based on the above discussion, it holds that $\max_{v\in S\setminus\{u\}}(w(u)+d_C(u,v)+w(v))=\max\{w(u)+d_C(u,u^{cw})+w(u^{cw}),w(u)+d_C(u,u^{ccw})+w(u^{ccw})\}$.

\begin{figure}[t]
    \centering
    \includegraphics[width=1.8in]{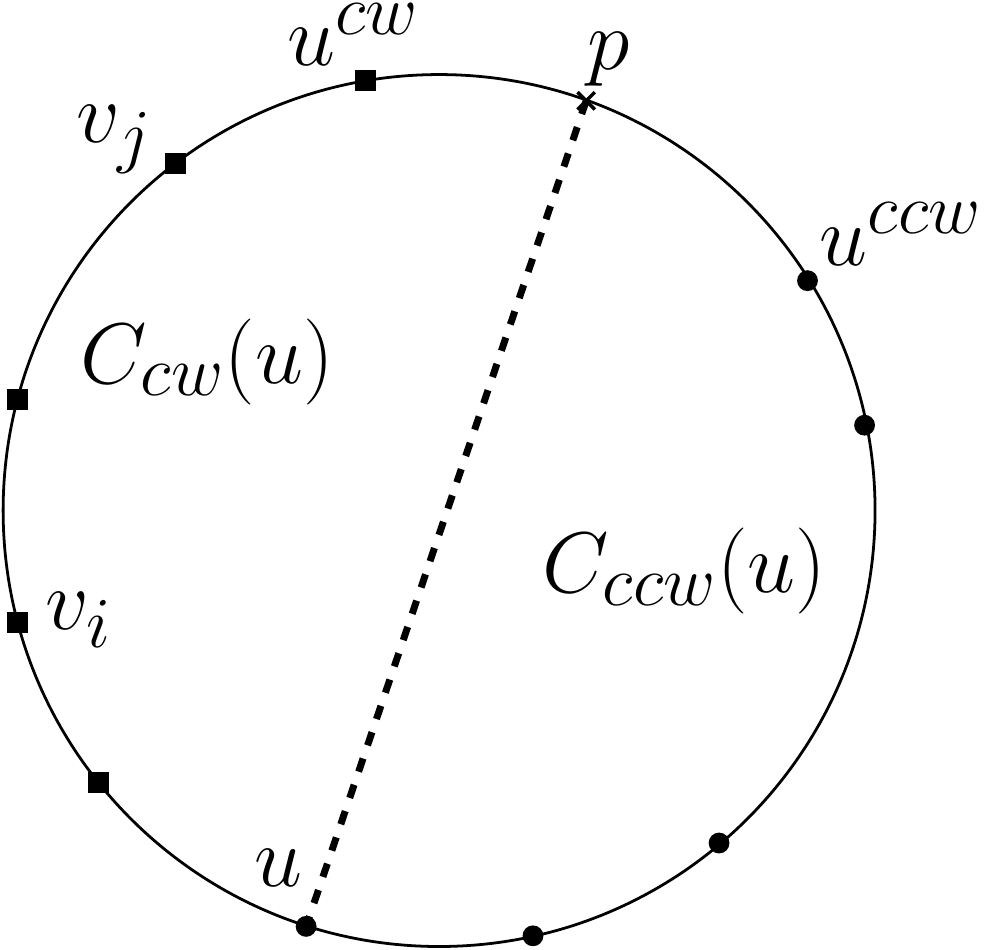}
    \caption{Illustrating the proof of Lemma~\ref{lem:pair}: we assume that $p$ is a point on $C$ that together with $u$ partitions $C$ into two half-cycles of equal length.}
    \label{fig:pair}
\end{figure}

To find  $u^{cw}$ and  $u^{ccw}$, notice that if we traverse the vertices $u$ of $S$ along $C$ in clockwise order, then both $u^{cw}$ and  $u^{ccw}$ are also ordered along $C$ in clockwise order. Hence, $u^{cw}$ and  $u^{ccw}$ for all vertices $u$ of $S$ can be found in total $O(n)$ time by traversing the vertices of $S$ along $C$. Consequently, we have $\max_{u,v\in S}((w(u)+d_C(u,v)+w(v))=\max_{u\in S}\max\{w(u)+d_C(u,u^{cw})+w(u^{cw}),w(u)+d_C(u,u^{ccw})+w(u^{ccw})\}$, which can be computed in $O(n)$ time.
\end{proof}

The proof of the following theorem summarizes our algorithm.

\begin{theorem}
    \label{theorem:1}
    The diameter (along with a diametral pair) of a unicycle graph can be computed in linear time.
\end{theorem}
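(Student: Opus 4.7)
The plan is to assemble the machinery developed in Lemmas~\ref{lemma:1}--\ref{lem:pair} into a single linear-time procedure and to verify that every stage runs in $O(n)$ time. First I would identify the unique cycle $C$ of $G$ in linear time (for example by a DFS that detects the single back-edge) and then obtain the trees in $\calT(G)=\{T(v_1),\ldots,T(v_m)\}$ by conceptually removing the edges of $C$. With this decomposition in hand, I handle the two cases for a diametral pair separately.

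For Case (1), where both endpoints of a diametral pair lie in the same $T(v_i)$, I compute the diameter of each $T(v_i)$ in time linear in $|T(v_i)|$ by the standard two-BFS (or two-DFS) routine on trees; summing over $i$ gives total $O(n)$. I record the largest such diameter $\Delta_1$ together with a witnessing pair. Simultaneously, for each $v_i$ I compute the weight $w(v_i)$ and remember the farthest vertex in $T(v_i)$ from $v_i$, all within the same $O(n)$ budget; these farthest vertices will later be needed to translate a vertex-weighted diametral pair of $C$ back into a diametral pair of $G$.

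For Case (2), I invoke Lemma~\ref{lem:preprocess} to build, in $O(n)$ time, the prefix-sum array along $C$ that supports $O(1)$-time queries of $d_C(\cdot,\cdot)$, membership in $C_{ccw}(v_i)$, and pairwise domination tests. I then execute Algorithm~\ref{algorithm:UnicyclicGraph}; because each vertex of $C$ is pushed onto $S$ at most twice (once during the full-cycle scan and possibly once more during the half-cycle scan) and every iteration of the inner \texttt{while} loop pops one vertex never to be pushed again, the pruning runs in $O(n)$ time and outputs a stack $S$ satisfying Lemma~\ref{lem:pruning}. By Lemma~\ref{lem:sizeS}, if $|S|=1$ I find the best partner of the unique element of $S$ by a single scan in $O(n)$ time, and if $|S|>1$ I invoke Lemma~\ref{lem:pair} to locate a pair of $S$ maximizing $w(u)+d_C(u,v)+w(v)$ in $O(n)$ time. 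Either way I obtain a vertex-weighted diametral pair $(v_i,v_j)$ of $C$ with value $\Delta_2$; replacing $v_i$ and $v_j$ by the stored farthest vertices in $T(v_i)$ and $T(v_j)$ produces an actual pair of vertices of $G$ at distance $\Delta_2$.

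The final output is $\max\{\Delta_1,\Delta_2\}$ together with the associated pair. The main conceptual obstacle has already been absorbed into Lemma~\ref{lem:pruning} and Lemma~\ref{lem:sizeS}, which guarantee that the pruning procedure never discards a vertex needed to witness the diameter of $C$; what remains for the theorem is a routine amortized-time analysis of the pruning stack and the bookkeeping needed to recover an explicit diametral pair. Assembling these pieces yields an $O(n)$-time algorithm, establishing the theorem.
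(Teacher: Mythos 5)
Your proposal assembles the same ingredients in the same way as the paper's proof: handle the intra-tree case by computing each $T(v_i)$'s diameter, handle the inter-tree case via the weights $w(v_i)$, the preprocessing of Lemma~\ref{lem:preprocess}, the pruning algorithm, and Lemmas~\ref{lem:sizeS} and~\ref{lem:pair}, then take the maximum of the two candidates and recover the diametral pair from the stored farthest vertices. This matches the paper's argument, and your added remarks on cycle detection and the amortized analysis of the stack are correct.
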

\begin{proof}
Recall that there are two cases for a diametral pair $(u^*,v^*)$ of $G$: (1) both $u^*$ and $v^*$ are in the same tree of $\calT(G)$; (2) $u^*$ and $v^*$ are in two different trees of $\calT(G)$. 

\begin{itemize}
    \item 
    The first case can be handled by computing the diameter and the corresponding diametral pair of each tree of $\calT(G)$, which can be done in total $O(n)$ time. The longest diameter of these trees is kept as a candidate diameter of $G$ and the corresponding diametral pair is kept as a candidate diametral pair. 
    
    \item
    For the second case, we first compute the weights $w(v_i)$ for all vertices $v_i$ of the cycle $C$; for each $v_i$, we also store its farthest vertex $f(v_i)$ in $T(v_i)$. Then, we perform the preprocessing of Lemma~\ref{lem:preprocess}. Next, we run the pruning algorithm to obtain $S$. If $|S|=1$, we find a vertex-weighted diametral pair of $C$ as described above; otherwise, we use Lemma~\ref{lem:pair} to find such a pair. In either case, let $(u,v)$ denote the pair. Then, a candidate diameter is $w(u)+d_C(u,v)+w(v)$. In addition, $(f(u),f(v))$ is a candidate diametral pair of $G$.
\end{itemize}

We compare the candidate diameters obtained from the above two cases and return the larger one as the diameter of $G$; the corresponding diametral pair is a diametral pair of $G$. The running time of the overall algorithm is $O(n)$.
\end{proof}

\section{The Diameter-Optimally Augmenting Trees (DOAT)}
\label{sec:DOAPForATree}

In this section, we solve the DOAT problem in $O(n^2\log n)$ time and $O(n)$ space. Our algorithm for computing the diameter of a unicycle graph will be used as a subroutine.

\subsection{Observations}
We follow the same notation as in Section~\ref{sec:ComputingTheDiameterOfAUnicylicGraph} such as $\pi_{G'}(s,t)$, $d_{G'}(s,t)$, $\Delta(G)$.

Let $T$ be a tree of $n$ vertices such that each edge of $T$ has a positive length. For any two vertices $u$ and $v$ of $T$, we use $e(u,v)$ to refer to the shortcut connecting $u$ and $v$; note that even if $T$ already has an edge connecting them, we can always assume that there is an alternative shortcut (or we could also consider the shortcut as the edge itself with the same length).
Let $|e(u,v)|$ denote the length of $e(u,v)$. Again, there is an oracle that can return the value $|e(u,v)|$ in $O(1)$ time for any shortcut $e(u,v)$.
Denote by $T+e(u,v)$ the graph after adding $e(u,v)$ to $T$. The goal of the DOAT problem is to find a shortcut $e(u,v)$ so that the diameter of the new graph $\Delta(T+e(u,v))$ is minimized. Let $\Delta^*(T)$ be the diameter of an optimal solution. In the following we assume that $\Delta^*(T)<\Delta(T)$, since otherwise any shortcut would be sufficient.

For any shortcut $e(u,v)$, $T$ has a unique path $\pi_T(u,v)$ between $u$ and $v$. We make an assumption that $|e(u,v)|< d_T(u,v)$ since otherwise $e(u,v)$ can never be used (indeed, whenever $e(u,v)$ was used in a shortest path, we could always replace it with $\pi_T(u,v)$ to get a shorter path). This assumption is only for the argument of the correctness of our algorithm; the algorithm itself still uses the true value of $|e(u,v)|$ (this does not affect the correctness, because if $|e(u,v)|\geq d_T(u,v)$, then $e(u,v)$ cannot be an optimal shortcut). For the reference purpose, we refer to this assumption as the {\em shortcut length assumption}.

At the outset, we compute a diametral path $P$ of $T$ in $O(n)$ time. Let $v_1,v_2,\ldots,v_m$ be the vertices of $P$ ordered along it. Removing the edges of $P$ from $T$ results in $m$ connected components of $T$, each of which is a tree containing a vertex of $P$; we let $T(v_i)$ denote the tree containing $v_i$. For each $v_i$, we define a weight $w(v_i)$ as the distance from $v_i$ to its farthest vertex in $T(v_i)$.
Let $\calT=\{T(v_i)\ | \ 1\leq i\leq m\}$.

For any pair $(i,j)$ of indices with $1\leq i<j\leq m$, we define a {\em critical pair} of vertices $(x,y)$ with $x\in T(v_i)$ and $y\in T(v_j)$ such that they minimize the value $d_{T(v_i)}(v_i,x')+|e(x',y')|+d_{T(v_j)}(y',v_j)$ among all vertex pairs $(x',y')$ with $x'\in T(v_i)$ and $y'\in T(v_j)$.

The following lemma will be used on several occasions later on.
\begin{lemma}\label{lem:10}
For any vertex $v$ in any tree $T(v_{k}) \in \calT$, it holds that $d_{T(v_k)}(v,v_k) \leq \\ \min\{d_T(v_1,v_k),d_T(v_k,v_m)\}$.
Also, $d_T(v_1,v_k)=d_P(v_1,v_k)$ and $d_T(v_k,v_m)=d_P(v_k,v_m)$.
\end{lemma}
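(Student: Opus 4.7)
The plan is to prove the two equalities first, then use them together with the extremality of the diametral path $P$ to derive the inequality by contradiction.

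First I would handle the two equalities $d_T(v_1,v_k) = d_P(v_1,v_k)$ and $d_T(v_k,v_m) = d_P(v_k,v_m)$. These are immediate from the fact that in a tree there is a unique simple path between any two vertices; since $v_1$, $v_k$, $v_m$ all lie on $P$ (in that order), the sub-path of $P$ from $v_1$ to $v_k$ (resp.\ from $v_k$ to $v_m$) is the unique path in $T$ between those endpoints, so its length equals the tree distance.

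Next I would exploit the structural observation that $T(v_k)$ is a connected component obtained by removing the edges of $P$ from $T$; consequently, any path in $T$ from a vertex $v \in T(v_k)$ to any vertex outside $T(v_k)$ must pass through $v_k$. In particular, $d_T(v,v_m) = d_{T(v_k)}(v,v_k) + d_T(v_k,v_m)$ and $d_T(v_1,v) = d_T(v_1,v_k) + d_{T(v_k)}(v,v_k)$.

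For the inequality, suppose for contradiction that $d_{T(v_k)}(v,v_k) > d_T(v_1,v_k)$. Combining with the decomposition above and the equalities already proven,
\begin{align*}
d_T(v,v_m) \;=\; d_{T(v_k)}(v,v_k) + d_T(v_k,v_m) \;>\; d_T(v_1,v_k) + d_T(v_k,v_m) \;=\; d_T(v_1,v_m) \;=\; \Delta(T),
\end{align*}
contradicting that $\Delta(T)$ is the diameter. The symmetric case $d_{T(v_k)}(v,v_k) > d_T(v_k,v_m)$ is handled analogously by bounding $d_T(v_1,v)$.

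The main obstacle is mostly conceptual rather than technical: one needs to recognize that the edges of $P$, when removed, disconnect $T(v_k)$ from the rest, forcing all distances from $v$ to vertices outside $T(v_k)$ to factor through $v_k$. Once this is in place the proof is a short comparison against the diameter $d_T(v_1,v_m) = \Delta(T)$, so I expect no serious difficulty beyond carefully stating the decomposition of paths.
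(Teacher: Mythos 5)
Your proposal is correct and follows essentially the same route as the paper's proof: the equalities come from $P$ being a path of the tree, and the inequality follows by contradiction from the decomposition $d_T(v,v_m)=d_{T(v_k)}(v,v_k)+d_T(v_k,v_m)$ compared against $\Delta(T)=d_T(v_1,v_m)$. You merely spell out the path-decomposition step in more detail than the paper does.
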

\begin{proof}
Assume to the contrary that $d_{T(v_k)}(v,v_k)> \min\{d_T(v_1,v_k),d_T(v_k,v_m)\}$. Without loss of generality, we assume  $d_{T(v_k)}(v,v_k)>d_T(v_1,v_k)$. Then, $d_T(v,v_m)=d_{T(v_k)}(v,v_k)+d_T(v_k,v_m)>d_T(v_1,v_k)+d_T(v_k,v_m)=d_T(v_1,v_m)=\Delta(T)$, a contradiction.

The second part of the lemma holds because $P$ is a path of $T$.
\end{proof}

The following lemma demonstrates why critical pairs are ``critical''.

\begin{lemma}
    \label{lemma:6}
   Suppose $e(u^*, v^*)$ is an optimal shortcut with $u^*\in T(v_i)$ and $v^*\in T(v_j)$. Then, $i\neq j$ and any critical pair of $(i, j)$ also defines an optimal shortcut.
\end{lemma}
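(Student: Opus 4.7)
The proof splits into two parts. For the first claim ($i\neq j$), I argue by contradiction: if $i=j$, both endpoints $u^{*}$ and $v^{*}$ lie in the single subtree $T(v_{i})$, so the unique cycle of $T+e(u^{*},v^{*})$ is confined to $T(v_{i})$ and the shortcut cannot reduce the distance between any two vertices lying in different subtrees $T(v_{a}),T(v_{b})$ with $a,b\neq i$. In particular $d_{T+e(u^{*},v^{*})}(v_{1},v_{m})=d_{T}(v_{1},v_{m})=\Delta(T)$, contradicting $\Delta(T+e(u^{*},v^{*}))=\Delta^{*}(T)<\Delta(T)$.

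For the main claim, let $(x,y)$ be any critical pair of $(i,j)$ and write $\alpha(p,q):=d_{T(v_{i})}(v_{i},p)+|e(p,q)|+d_{T(v_{j})}(q,v_{j})$, so $\alpha(x,y)\le\alpha(u^{*},v^{*})$ by definition. The plan is to prove $\Delta(T+e(x,y))\le\Delta^{*}(T)$, which by minimality forces equality and hence optimality of $(x,y)$. Fix an arbitrary vertex pair $(s,t)$. If $d_{T}(s,t)\le\Delta^{*}(T)$ the bound is immediate, since adding an edge only shortens distances. Otherwise $d_{T+e(u^{*},v^{*})}(s,t)\le\Delta^{*}(T)<d_{T}(s,t)$, forcing the optimal shortcut to be used on $(s,t)$, so WLOG $d_{T}(s,u^{*})+|e(u^{*},v^{*})|+d_{T}(v^{*},t)\le\Delta^{*}(T)$; I would then exhibit a path in $T+e(x,y)$ from $s$ to $t$ of length at most $\Delta^{*}(T)$.

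In the clean sub-case $s\notin T(v_{i})$ and $t\notin T(v_{j})$, the $T$-paths $s\to u^{*}$ and $v^{*}\to t$ must pass through $v_{i}$ and $v_{j}$, because $T(v_{i})$ and $T(v_{j})$ are attached to the rest of $T$ only at $v_{i}$ and $v_{j}$ respectively. Thus $d_{T}(s,u^{*})=d_{T}(s,v_{i})+d_{T(v_{i})}(v_{i},u^{*})$ and $d_{T}(v^{*},t)=d_{T(v_{j})}(v^{*},v_{j})+d_{T}(v_{j},t)$. Applying the same decomposition to $(x,y)$ and combining with $\alpha(x,y)\le\alpha(u^{*},v^{*})$ gives
\[
d_{T}(s,x)+|e(x,y)|+d_{T}(y,t)=d_{T}(s,v_{i})+\alpha(x,y)+d_{T}(v_{j},t)\le d_{T}(s,u^{*})+|e(u^{*},v^{*})|+d_{T}(v^{*},t)\le\Delta^{*}(T),
\]
so the path $s\to x\to y\to t$ in $T+e(x,y)$ has the required length.

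The main obstacle will be the remaining sub-cases where $s\in T(v_{i})$ or $t\in T(v_{j})$ (possibly both). Here $d_{T}(s,u^{*})$ may be strictly smaller than $d_{T}(s,v_{i})+d_{T(v_{i})}(v_{i},u^{*})$ whenever $s$ and $u^{*}$ share a branch of $T(v_{i})$ rooted at $v_{i}$, so the clean telescoping above is too loose to conclude on its own. My plan is to compensate by invoking the optimality of $(u^{*},v^{*})$ on auxiliary pairs such as $(v_{i},t)$, $(s,v_{j})$, or $(f(v_{i}),f(v_{j}))$ where $f(v_{k})$ denotes the farthest vertex from $v_{k}$ inside $T(v_{k})$, in order to extract an $\alpha$-based bound that can be transferred back onto $(s,t)$; Lemma~\ref{lem:10} will be used to control each $d_{T(v_{k})}(\cdot,v_{k})$ by $\min\{d_{P}(v_{1},v_{k}),d_{P}(v_{k},v_{m})\}$ as needed. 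This case-by-case bookkeeping, carefully aligning the auxiliary pair with the position of $s$ and $t$, is where I expect the real technical work to lie.
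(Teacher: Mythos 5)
Your argument for $i\neq j$ and for the sub-case $s\notin T(v_i)$, $t\notin T(v_j)$ is correct and matches the paper's proof, which runs the same comparison $\alpha(x,y)\le\alpha(u^*,v^*)$ and telescopes the distances through $v_i$ and $v_j$. The gap is that the remaining sub-cases, which you explicitly defer to a ``plan,'' contain essentially all of the technical content, and the plan as stated is missing the key idea. Several of those configurations are not merely harder to bound --- they are \emph{impossible}: if $s$ and $t$ both lie in $T(v_i)$ (or both in $T(v_j)$), or if $s$ lies in some $T(v_k)$ with $k<i$ while $t\in T(v_i)$ (and symmetrically on the $j$ side), then a shortest $s$--$t$ path in $T^*=T+e(u^*,v^*)$ that uses the shortcut would have to contain the subpath $P[v_i,v_j]$ of $P$. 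But since $\Delta(T^*)<\Delta(T)=d_T(v_1,v_m)$, the path $\pi_{T^*}(v_1,v_m)$ must bypass $P[v_i,v_j]$ via the shortcut, so $P[v_i,v_j]$ is not a shortest $v_i$--$v_j$ path in $T^*$ and cannot appear inside any $\pi_{T^*}(s,t)$; contradiction. Without ruling these cases out, you would be hunting for a short $s$--$t$ path in $T+e(x,y)$ in configurations where the premise (that the shortcut is used on $(s,t)$) cannot hold, and no $\alpha$-based bound will materialize.

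For the cases that do occur ($s\in T(v_i)$ and/or $t\in T(v_j)$), the auxiliary pairs you name, such as $(v_i,t)$, $(s,v_j)$, and $(f(v_i),f(v_j))$, are not the ones that close the argument; the paper compares against $(v_1,v_m)$ when $s\in T(v_i)$ and $t\in T(v_j)$, and against $(v_1,t)$ (resp.\ $(s,v_m)$) when only one endpoint sits in $T(v_i)$ (resp.\ $T(v_j)$). Concretely, Lemma~\ref{lem:10} gives $d_{T(v_i)}(s,v_i)\le d_T(v_1,v_i)$ and $d_{T(v_j)}(t,v_j)\le d_T(v_j,v_m)$, whence $d_{T+e(x,y)}(s,t)\le d_T(v_1,v_i)+\alpha(x,y)+d_T(v_j,v_m)\le d_T(v_1,v_i)+\alpha(u^*,v^*)+d_T(v_j,v_m)=d_{T^*}(v_1,v_m)\le\Delta(T^*)$; the one-sided cases additionally require pairing this with the trivial route $d_{T+e(x,y)}(s,t)\le d_T(v_1,v_i)+d_T(v_i,t)$, since $d_{T^*}(v_1,t)$ is the minimum of the route along $P$ and the route through the shortcut. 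None of this follows from what you wrote, so the proof is incomplete as it stands, even though the overall strategy is the right one.
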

\begin{proof}
Because $P$ is a diametral path of $T$, if $i = j$, then $P$ is still the shortest path from $v_1$ to $v_m$ in the new graph $T+e(u^*,v^*)$, and thus
we have $\Delta^*=d_{T + e(u^*, v^*)}(v_{1}, v_{m})=d_{T}(v_{1}, v_{m})=\Delta(T)$. But this contradicts with our assumption $\Delta^*<\Delta(T)$. Hence, $i\neq j$.

Let $(x,y)$ be a critical pair of $(i,j)$ with $x\in T(v_i)$ and $y\in T(v_j)$.
Define $\phi(v_i, v_j) = d_{T(v_i)}(v_{i}, x) + |e(x, y)| + d_{T(v_j)}(v_{j}, y)$ and $\phi^*(v_i, v_j) = d_{T(v_i)}(v_{i}, u^*) + |e(u^*, v^*)| + d_{T(v_j)}(v_{j}, v^*)$.
By the definition of critical pairs, we have $\phi(v_i, v_j) \leq \phi^*(v_i, v_j)$. Also, due to the shortcut length assumption, it holds that $\phi(v_i,v_j)\leq |e(v_i,v_j)|< d_{T}(v_{i}, v_{j})$.

In the following, we prove that $d_{T + e(x, y)}(u, v) \leq \Delta(T + e(u^*, v^*))$ for any two vertices $u, v \in T$. This will prove the lemma. To simplify the notation, let $T_{xy}=T+e(x,y)$ and $T^*=T + e(u^*, y^*)$.

First of all, if $d_{T^*}(u, v) = d_{T}(u, v)$, then $d_{T_{xy}}(u, v) \leq d_T(u,v) = d_{T^*}(u, v)\leq \Delta(T^*)$. Below we assume $d_{T^*}(u, v) \neq d_{T}(u, v)$. Thus, $d_{T^*}(u, v) < d_{T}(u, v)$ and the shortest path $\pi_{T^*}(u, v)$ must contain the shortcut $e(u^*,v^*)$.

Let $T(v_k)$ and $T(v_h)$ be the trees of $\calT$ that contain $u$ and $v$, respectively, with $k\leq h$.
Based on relationship of the indices $k$, $h$, $i$, and $j$, there are several cases.

    \begin{enumerate}
        \item $\{k, h\}\cap \{i, j\}=\emptyset$; e.g., see Fig.~\ref{fig:case10}. In this case, we have
        \begin{align*}
            d_{T_{xy}}(u,v) &\leq d_{T}(u, v_{i}) + \phi(v_i, v_j) + d_{T}(v_{j}, v) \\
            &\leq d_{T}(u, v_{i}) + \phi^*(v_i, v_j) + d_{T}(v_{j}, v) \\
            & = d_{T^*}(u, v) \leq \Delta(T^*).
        \end{align*}
        Note that $d_{T}(u, v_{i}) + \phi^*(v_i, v_j) + d_{T}(v_{j}, v)=d_{T^*}(u, v)$ because $\pi_{T^*}(u, v)$ contains $e(u^*,v^*)$ and $k\leq h$.

\begin{figure}[t]
\begin{minipage}[t]{0.55\textwidth}
\begin{center}
\includegraphics[height=1.4in]{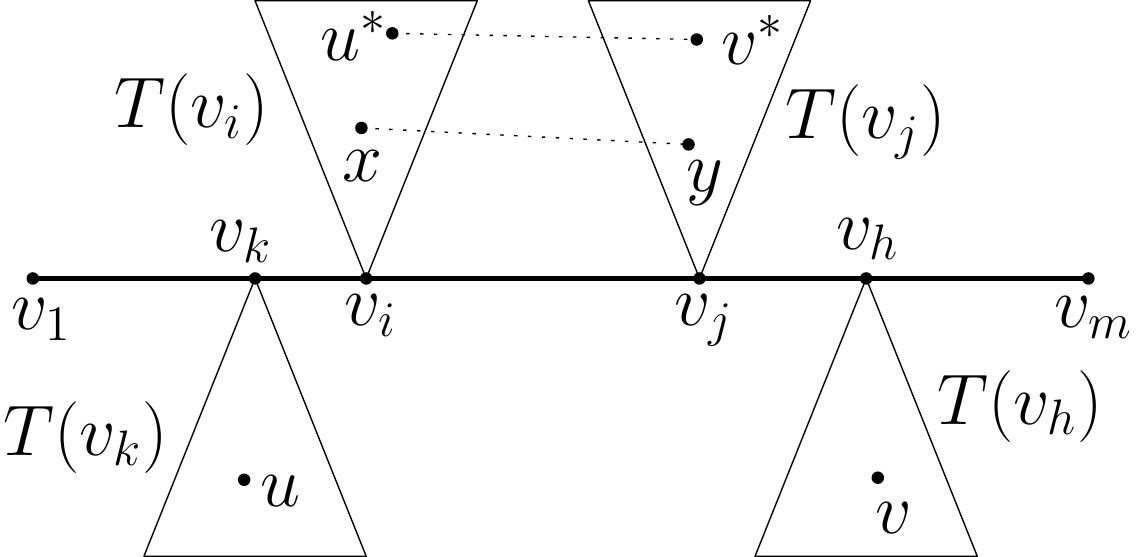}
\caption{\footnotesize Illustrating the case $\{k, h\}\cap \{i, j\}=\emptyset$.}
\label{fig:case10}
\end{center}
\end{minipage}
\hspace{0.05in}
\begin{minipage}[t]{0.44\textwidth}
\begin{center}
\includegraphics[height=0.9in]{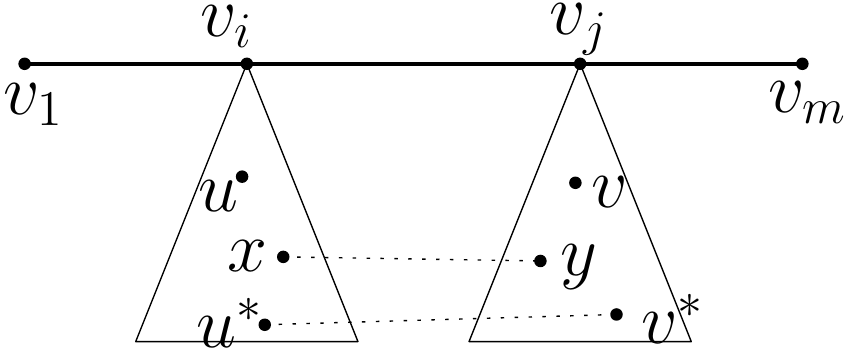}
\caption{\footnotesize Illustrating the case $k = i$ and $h = j$.}
\label{fig:case20}
\end{center}
\end{minipage}
\vspace{-0.15in}
\end{figure}

        \item $k = i$ and $h = j$; e.g., see Fig.~\ref{fig:case20}. In this case, we have
        \begin{align*}
            d_{T_{xy}}(u, v) &\leq d_{T(v_{i})}(u, x) + |e(x, y)| + d_{T(v_{j})}(y, v) \\
            &\leq d_{T(v_i)}(u, v_{i}) + d_{T(v_{i})}(v_{i}, x) + |e(x, y)| + d_{T(v_{j})}(y,v_{j}) + d_{T(v_j)}(v_{j}, v).
        \end{align*}
        By Lemma~\ref{lem:10}, $d_{T(v_i)}(u, v_{i})\leq d_T(v_1,v_i)$ and $d_{T(v_j)}(v_{j}, v)\leq d_T(v_j,v_m)$. Hence, we can obtain
        \begin{align*}
            d_{T_{xy}}(u, v) & \leq d_{T}(v_{1}, v_{i}) +  d_{T(v_{i})}(v_{i}, x) + |e(x, y)| + d_{T(v_{j})}(y,v_{j}) +d_T(v_j,v_m) \\
            &= d_{T}(v_{1}, v_{i}) + \phi(v_i, v_j) + d_{T}(v_{j}, v_m) \\
            &\leq d_{T}(v_{1}, v_{i}) + \phi^*(v_i, v_j) + d_{T}(v_{j}, v_m) \\
            & = d_{T^*}(v_{1}, v_m) \leq \Delta(T^*).
        \end{align*}

\begin{figure}[t]
\begin{minipage}[t]{0.49\textwidth}
\begin{center}
\includegraphics[height=1.4in]{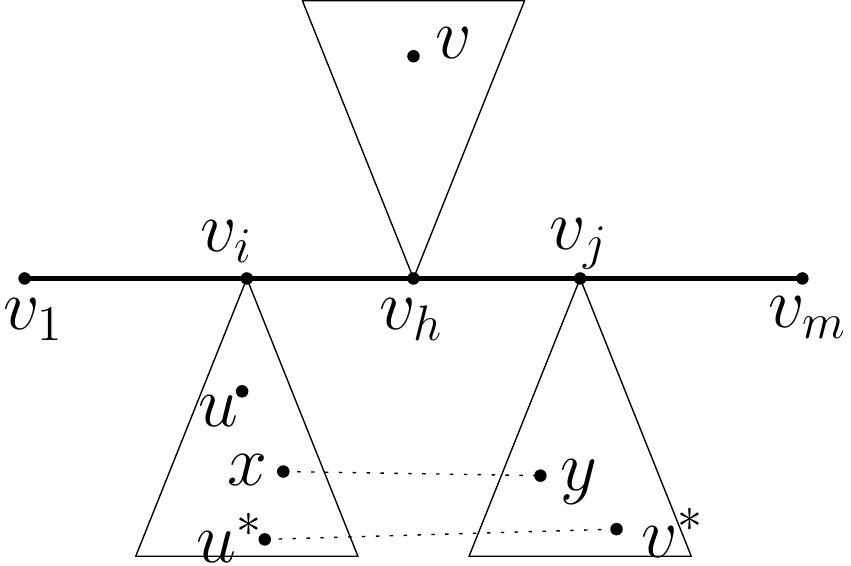}
\caption{\footnotesize Illustrating the case $k = i$, $h > i$, and $h\neq j$.}
\label{fig:case30}
\end{center}
\end{minipage}
\hspace{0.05in}
\begin{minipage}[t]{0.49\textwidth}
\begin{center}
\includegraphics[height=1.0in]{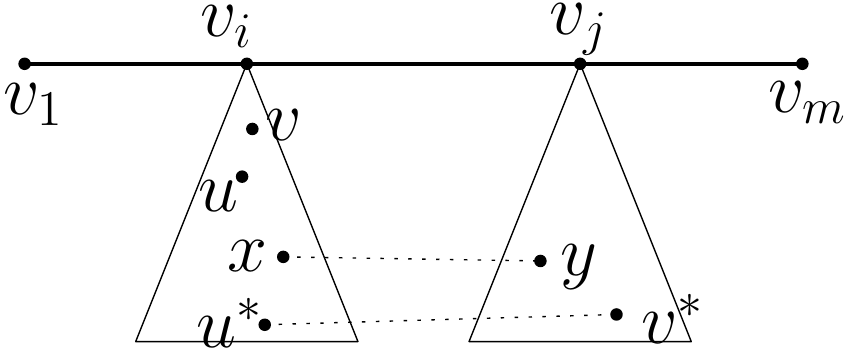}
\caption{\footnotesize Illustrating the case $k = h = i$.}
\label{fig:case40}
\end{center}
\end{minipage}
\vspace{-0.15in}
\end{figure}

        \item $k = i$, $h > i$, and $h\neq j$; e.g., see Fig.~\ref{fig:case30}. In this case, we have
        \begin{align*}
            d_{T_{xy}}(u, v) &\leq d_{T(v_{i})}(u, x) + |e(x, y)| + d_{T(v_{j})}(y,v_{j}) + d_{T}(v_{j}, v) \\
            &\leq d_{T(v_i)}(u,v_i)+ d_{T(v_{i})}(v_i, x) + |e(x, y)| + d_{T(v_{j})}(y,v_{j}) + d_{T}(v_{j}, v) \\
            &\leq d_{T}(v_{1}, v_{i}) + d_{T(v_{i})}(v_{i}, x) + |e(x, y)| + d_{T(v_{j})}(y,v_{j}) + d_{T}(v_{j}, v) \\
            &= d_{T}(v_{1}, v_{i}) + \phi(v_i, v_j) + d_{T}(v_{j}, v) \\
            &\leq d_{T}(v_{1}, v_{i}) + \phi^*(v_i, v_j) + d_{T}(v_{j}, v).
        \end{align*}
        On the other hand, since $h>i$, we also have
        \begin{align*}
            d_{T_{xy}}(u, v) &\leq d_{T(v_{i})}(u, v_i) + d_{T}(v_{i}, v)        \leq d_{T}(v_{1}, v_{i}) + d_{T}(v_{i}, v).
        \end{align*}
        Notice that $d_{T^*}(v_1,v)=\min\{d_{T}(v_{1}, v_{i}) + d_{T}(v_{i}, v), d_{T}(v_{1}, v_{i}) + \phi^*(v_i, v_j) + d_{T}(v_{j}, v)\}$. Combining the above two inequalities, we derive $d_{T_{xy}}(u,v)\leq d_{T^*}(v_1,v)\leq \Delta(T^*)$.

        \item $k < j$, $k\neq i$, and $h = j$. This is case is symmetric to the above case and we can prove $d_{T_{xy}}(u,v)\leq \Delta(T^*)$ by a similar argument.

        \item $k = h = i$; e.g., see Fig.~\ref{fig:case40}. We claim that this case cannot happen. Indeed, recall that $\pi_{T^*}(u,v)$ contains $e(u^*,v^*)$. Since both $u$ and $v$ are in $T(v_i)$, $\pi_{T^*}(u,v)$ must also contain $P[v_i,v_j]$, where $P[v_i,v_j]$ is the subpath of $P$ between $v_i$ and $v_j$.
        On the other hand, $\Delta(T)=d_T(v_1,v_m)$, which is equal to the length of $P$. Since $\Delta(T)>\Delta^*=\Delta(T^*)$, $\pi_{T^*}(v_1,v_m)$ must contain the shortcut $e(u^*,v^*)$ but cannot contain $P[v_i,v_j]$. This implies that $P[v_i,v_j]$ is not a shortest path between $v_i$ and $v_j$ in $T^*$. However, since $\pi_{T^*}(u,v)$, which is a shortest path between $u$ and $v$ in $T^*$, contains $P[v_i,v_j]$, $P[v_i,v_j]$ must be a shortest path between $v_i$ and $v_j$ in $T^*$. We thus obtain contradiction.

        \item $k = h = j$. This case is symmetric to the above case. By a similar argument, we can show that it cannot happen.

\begin{figure}[t]
\begin{minipage}[t]{\textwidth}
\begin{center}
\includegraphics[height=1.4in]{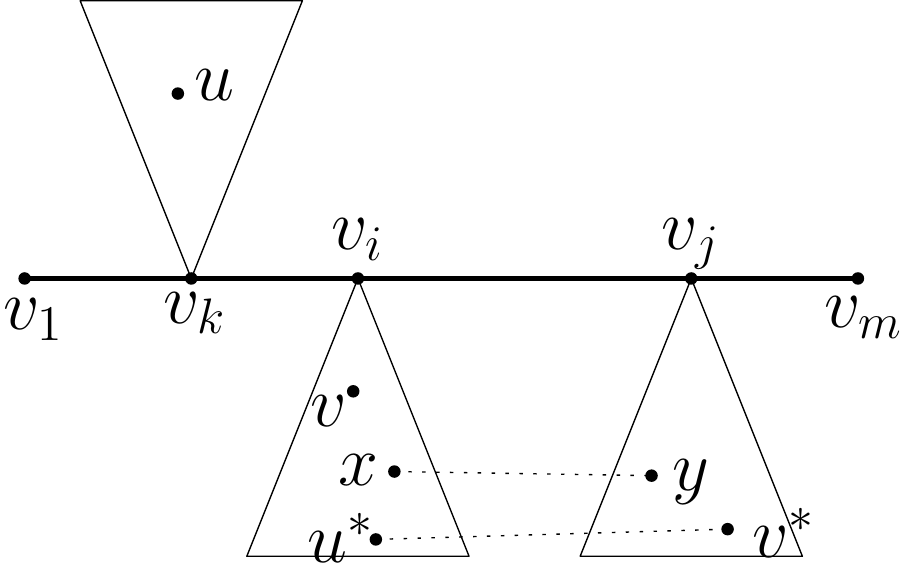}
\caption{\footnotesize Illustrating the case $k < i$ and $h = i$.}
\label{fig:case50}
\end{center}
\end{minipage}
\vspace{-0.15in}
\end{figure}

        \item $k < i$ and $h = i$; e.g., see Fig.~\ref{fig:case50}. We claim that this case cannot happen. Indeed, since $\pi_{T^*}(u,v)$ contains $e(u^*,v^*)$ and $k<i$, $\pi_{T^*}(u,v)$ must also contain the subpath of $P$ between $v_i$ and $v_j$. By the same analysis as the above fifth case, we can obtain contradiction.

        \item $k = j$ and $h > j$. This case is symmetric to the above case. By a similar argument, we can show that it cannot happen.
    \end{enumerate}


    In summary, $d_{T_{xy}}(u,v)\leq \Delta(T^*)$ holds in all possible cases. The lemma thus follows.
\end{proof}

\subsection{Reducing DOAT to finding a shortcut for $\boldsymbol{P}$}

In light of Lemma~\ref{lemma:6}, we reduce our DOAT problem on $T$ to finding a shortcut for the vertex-weighted path $P$ as follows.

For an index pair $(i,j)$ with $1\leq i<j\leq m$, we define a shortcut $\e(v_i,v_j)$ connecting $v_i$ and $v_j$ with length $|\e(v_i,v_j)|= d_{T(v_{i})}(v_{i}, x) + |e(x, y)| + d_{T(v_{j})}(v_{j}, y)$, where $(x,y)$ is a critical pair of $(i,j)$. The diameter $\Delta(P+\e(v_i,v_j))$ is defined as $\max_{1\leq k<h\leq m}\{w(v_k)+d_{P+\e(v_i,v_j)}(v_k,v_h)+w(v_h)\}$. The diameter-optimally augmenting path (DOAP) problem on $P$ is to find a shortcut $\e(v_i,v_j)$ so that the diameter $\Delta(P+\e(v_i,v_j))$ is minimized; we use $\Delta^*(P)$ to denote the minimized diameter.

With the help of Lemma~\ref{lemma:6}, the following lemma shows that the DOAT problem on $T$ can be reduced to the DOAP problem on $P$. Similar problem reductions were also used in~\cite{ref:GrobeFa15,ref:BiloAl18}. 

\begin{lemma}
    \label{lemma:5}
    \begin{enumerate}
        \item For any index pair $(i,j)$ with $1\leq i<j\leq m$, it holds that $\Delta(P+\e(v_i,v_j))=\Delta(T+e(x,y))$, where $(x,y)$ is a critical pair of $(i,j)$.
        \item If $\e(v_i,v_j)$ is an optimal shortcut for the DOAP problem on $P$, then $e(x,y)$ is an optimal shortcut for the DOAT problem on $T$, where $(x,y)$ is a critical pair of $(i,j)$.
        \item $\Delta^*(T)=\Delta^*(P)$.
    \end{enumerate}
\end{lemma}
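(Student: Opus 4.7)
The plan is to prove statement (1) as the technical core, then deduce (2) and (3) as short corollaries via Lemma~\ref{lemma:6}. For (1) I would establish the two matching inequalities $\Delta(T+e(x,y))\leq \Delta(P+\e(v_i,v_j))$ and $\Delta(T+e(x,y))\geq \Delta(P+\e(v_i,v_j))$.

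For the upper bound, the key identity is $d_{T+e(x,y)}(v_k,v_h)=d_{P+\e(v_i,v_j)}(v_k,v_h)$ for all path vertices: any shortcut-using shortest path in $T+e(x,y)$ takes the form $v_k\to v_i\to x\to y\to v_j\to v_h$ of length $d_P(v_k,v_i)+d_{T(v_i)}(v_i,x)+|e(x,y)|+d_{T(v_j)}(v_j,y)+d_P(v_j,v_h)=d_P(v_k,v_i)+|\e(v_i,v_j)|+d_P(v_j,v_h)$, matching the shortcut branch in $P+\e(v_i,v_j)$. For any $u\in T(v_k), v\in T(v_h)$ with $k\neq h$, routing through $v_k$ and $v_h$ gives $d_{T+e(x,y)}(u,v)\leq w(v_k)+d_{P+\e(v_i,v_j)}(v_k,v_h)+w(v_h)\leq \Delta(P+\e(v_i,v_j))$. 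For $k=h$, $d_{T+e(x,y)}(u,v)\leq 2w(v_k)$, and I would show $2w(v_k)\leq \Delta(P+\e(v_i,v_j))$ by inspecting the pairs $(v_1,v_k)$ and $(v_k,v_m)$: Lemma~\ref{lem:10} gives $w(v_k)\leq d_P(v_1,v_k)$ and $w(v_k)\leq d_P(v_k,v_m)$, and a short case analysis (the only nontrivial case is when both $(v_1,v_k)$ and $(v_k,v_m)$ are realized via the shortcut, in which case their sum equals $d_P(v_1,v_m)+2|\e(v_i,v_j)|\geq 2w(v_k)$) shows $\max\{d_{P+\e(v_i,v_j)}(v_1,v_k),d_{P+\e(v_i,v_j)}(v_k,v_m)\}\geq w(v_k)$, so one of the two weighted pairs has value at least $2w(v_k)$.

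For the lower bound, let $(v_{k^*},v_{h^*})$ attain $\Delta(P+\e(v_i,v_j))$. If $\{k^*,h^*\}\cap\{i,j\}=\emptyset$, then neither $T(v_{k^*})$ nor $T(v_{h^*})$ contains a shortcut endpoint, so any path between $f(v_{k^*})$ and $f(v_{h^*})$ in $T+e(x,y)$ must traverse $v_{k^*}$ and $v_{h^*}$, yielding $d_{T+e(x,y)}(f(v_{k^*}),f(v_{h^*}))=w(v_{k^*})+d_{P+\e(v_i,v_j)}(v_{k^*},v_{h^*})+w(v_{h^*})$. Otherwise I would push the maximizer to path endpoints: if $k^*=i$, the identity $d_{P+\e(v_i,v_j)}(v_1,v_{h^*})=d_P(v_1,v_i)+d_{P+\e(v_i,v_j)}(v_i,v_{h^*})$ (valid since $1\leq i\leq h^*$) combined with $w(v_i)\leq d_P(v_1,v_i)$ from Lemma~\ref{lem:10} produces a weighted value at $(1,h^*)$ no smaller than at $(i,h^*)$; a symmetric argument replaces $h^*=j$ by $m$. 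In the extremal sub-cases $i=1$ or $j=m$, $w(v_1)=w(v_m)=0$ by Lemma~\ref{lem:10} and hence $T(v_1)=\{v_1\}$, $T(v_m)=\{v_m\}$, so the endpoint itself is the farthest vertex and the identity $d_{T+e(x,y)}(v_1,v_m)=d_{P+\e(v_i,v_j)}(v_1,v_m)$ directly realizes the value in $T+e(x,y)$.

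Statements (2) and (3) then follow quickly. For (2), let $e(u',v')$ be a DOAT-optimal shortcut with $u'\in T(v_{i'}),v'\in T(v_{j'})$; Lemma~\ref{lemma:6} gives that the critical pair $(x',y')$ of $(i',j')$ is also DOAT-optimal, and two applications of (1) combined with DOAP-optimality of $\e(v_i,v_j)$ yield $\Delta(T+e(x,y))=\Delta(P+\e(v_i,v_j))\leq \Delta(P+\e(v_{i'},v_{j'}))=\Delta(T+e(x',y'))=\Delta^*(T)$. Statement (3) is a one-line consequence: plugging a DOAP-optimal $\e(v_i,v_j)$ into (1) yields $\Delta^*(P)=\Delta(T+e(x,y))\geq \Delta^*(T)$, and the reverse inequality follows from the DOAT-to-DOAP direction via Lemma~\ref{lemma:6}. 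The main obstacle is the lower bound in (1) when $\{k^*,h^*\}$ meets $\{i,j\}$: the naive pair $(f(v_i),f(v_j))$ in $T+e(x,y)$ can have strictly smaller distance than the weighted path value because the tree path from $f(v_i)$ to $x$ need not pass through $v_i$, allowing the shortcut to be exploited more aggressively than $\e(v_i,v_j)$ captures; the endpoint-replacement argument grounded in Lemma~\ref{lem:10} is what resolves this difficulty.
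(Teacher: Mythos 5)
Your overall architecture matches the paper's: statement (1) carries the weight, (2) and (3) follow from it via Lemma~\ref{lemma:6}, and (1) splits into two inequalities built on the identity $d_{T+e(x,y)}(v_k,v_h)=d_{P+\e(v_i,v_j)}(v_k,v_h)$ for path vertices, the routing through $v_k$ and $v_h$ when $k\neq h$, and Lemma~\ref{lem:10}. Your treatment of the $k=h$ sub-case of the direction $\Delta(T+e(x,y))\leq\Delta(P+\e(v_i,v_j))$ (bounding $d_{T+e(x,y)}(u,v)$ by $2w(v_k)$ and showing one of the two half-distances from $v_k$ survives the shortcut) is a clean and correct alternative to the paper's argument there.

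The gap is in the other direction, $\Delta(P+\e(v_i,v_j))\leq\Delta(T+e(x,y))$, in configurations your case analysis does not reach: the maximizing pair $(v_{k^*},v_{h^*})$ can satisfy $h^*=i$ (both indices at most $i$) or $k^*=j$ (both indices at least $j$). Your replacements only trade a \emph{left} index equal to $i$ for $1$ and a \emph{right} index equal to $j$ for $m$, so they do not touch these cases, and these cases do occur as unique maximizers: take $P=v_1v_2v_3$ with both edges of length $10$, $w(v_1)=w(v_3)=0$, $w(v_2)=10$, and $(i,j)=(2,3)$ with $|\e(v_2,v_3)|=1$; the unique maximizer is $(v_1,v_2)$, i.e., $h^*=i$. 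The natural analogue of your replacement, trading the right index $i$ for $m$, would require $d_{P+\e(v_i,v_j)}(v_i,v_m)\geq w(v_i)$, which fails here ($1<10$). So one must certify the value directly via the pair $(f(v_{k^*}),f(v_i))$ in $T+e(x,y)$, and the very danger you flag in your last paragraph -- that the shortest path from $f(v_i)$ might leave $T(v_i)$ through $x$ and the shortcut rather than through $v_i$ -- has to be ruled out. This is not done by Lemma~\ref{lem:10}; it requires the minimality of the critical pair together with the shortcut length assumption (from $d_{T(v_i)}(v_i,x)+|e(x,y)|+d_{T(v_j)}(y,v_j)<d_P(v_i,v_j)$ one shows any such detour is strictly longer, by at least $2\bigl(|e(x,y)|+d_{T(v_j)}(y,v_j)\bigr)$). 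This is exactly the content of the paper's cases $j=k$ and $i=h$ in its proof of the first statement, and without it the inequality is unproved precisely in the configurations where the critical-pair hypothesis is actually needed.
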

\begin{proof}
Before proving the first lemma statement, we first use it to prove the second and third lemma statements.

\subparagraph{The second lemma statement.}
Suppose $\e(v_i,v_j)$ is an optimal shortcut for the DOAP problem on $P$. By the first lemma statement, we have $\Delta(P+\e(v_i,v_j))=\Delta(T+e(x,y))$. Assume to the contrary that $e(x,y)$ is not an optimal shortcut for the DOAT problem for $T$, and let $e(u^*,v^*)$ instead be an optimal shortcut, with $u^*\in T(v_k)$ and $v^*\in T(v_h)$. Then, $\Delta(T+e(x,y))>\Delta(T+e(u^*,v^*))$.
By Lemma~\ref{lemma:6}, $k\neq h$ and the critical pair $(x',y')$ of $(k,h)$ also defines an optimal shortcut for $T$. Hence, $\Delta(T+e(u^*,v^*))=\Delta(T+e(x',y'))$. By the first lemma statement, $\Delta(P+\e(v_k,v_h))=\Delta(T+e(x',y'))$. Combining all above, we obtain
\begin{align*}
    \Delta(P+\e(v_k,v_h))=\Delta(T+e(x',y')) = \Delta(T+e(u^*,v^*)) < \Delta(T+e(x,y)) = \Delta(P+\e(v_i,v_j)).
\end{align*}
But this incurs contradiction since $\e(v_i,v_j)$ is an optimal shortcut for the DOAP problem on $P$. Hence, the second lemma statement is proved.

\subparagraph{The third lemma statement.}
The third lemma statement follows immediately from the first two lemma statements. Indeed, suppose $\e(v_i,v_j)$ is an optimal solution for the DOAP problem on $P$. Then, by the second lemma statement, $e(x,y)$ is an optimal shortcut for the DOAT problem on $T$, where $(x,y)$ is a critical pair of $(i,j)$. Hence, we have $\Delta(P^*)=\Delta(P+\e(v_i,v_j))$ and $\Delta(T^*)=\Delta(T+e(x,y))$. By the first lemma statement, $\Delta(P+\e(v_i,v_j))=\Delta(T+e(x,y))$. Hence, $\Delta^*(T)=\Delta^*(P)$.

\subparagraph{The first lemma statement.}
We now prove the first lemma statement. To simplify the notation, let $P'=P+\e(v_i,v_j)$ and $T'=T+e(x,y)$. Our goal is to prove $\Delta(P')=\Delta(T')$.

First of all, because $(x,y)$ is a critical pair of $(i,j)$, $d_{P'}(v_k,v_h)=d_{T'}(v_k,v_h)$ holds for any two vertices $v_k$ and $v_h$ of $P$ with $1\leq k<h\leq m$.

We first prove that $\Delta(P') \leq \Delta(T')$. Let $v_k$ and $v_h$ be any two vertices of $P$ with $1 \leq k < h \leq m$. It suffices to show that $w(v_{k}) + d_{P'}(v_{k}, v_{h}) + w(v_{h})\leq \Delta(T')$.

Let $u$ be the vertex of $T(v_{k})$ farthest from $v_k$, i.e.,
$w(v_{k}) = d_{T(v_{k})}(v_k, u)$. Similarly, let $v$ be the vertex of $T(v_{h})$ farthest from $v_h$. Recall that $i<j$ and $k<h$. Depending on the values of $i,j,k,h$, there are several cases.

\begin{figure}[t]
\begin{minipage}[t]{0.49\textwidth}
\begin{center}
\includegraphics[height=0.9in]{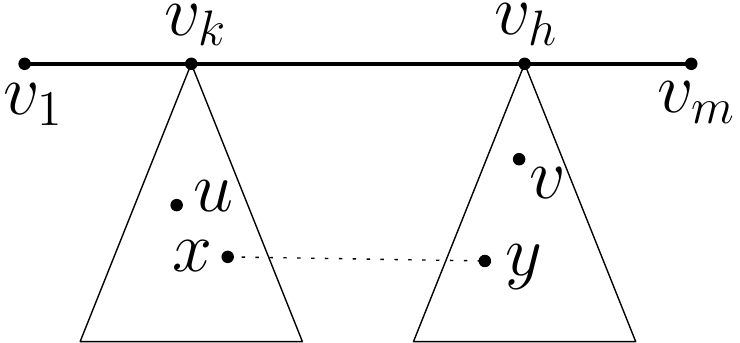}
\caption{\footnotesize Illustrating the case $i=k$ and $j=h$.}
\label{fig:reduction10}
\end{center}
\end{minipage}
\hspace{0.05in}
\begin{minipage}[t]{0.49\textwidth}
\begin{center}
\includegraphics[height=1.4in]{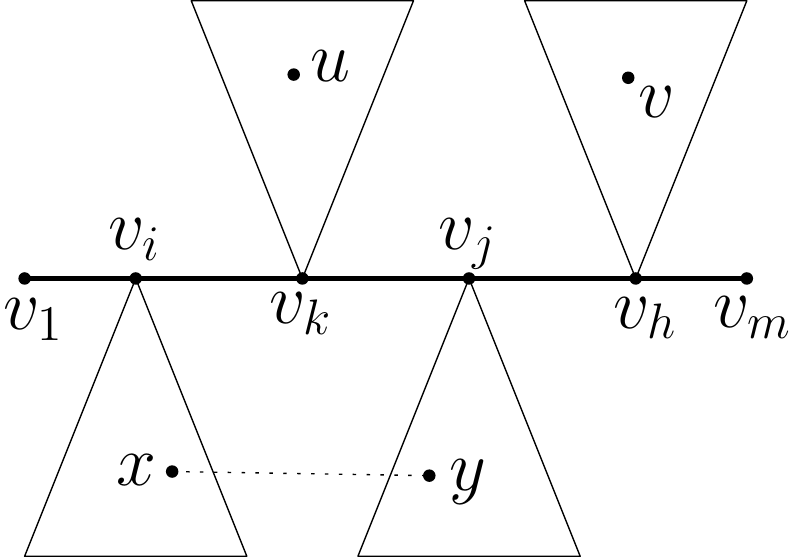}
\caption{\footnotesize Illustrating the case $\{i,j\}\cap \{k,h\}=\emptyset$.}
\label{fig:reduction20}
\end{center}
\end{minipage}
\vspace{-0.15in}
\end{figure}

\begin{enumerate}
\item $i=k$ and $j=h$; e.g., see Fig.~\ref{fig:reduction10}. In this case, $d_{T'}(v_1,v_k)=d_{T}(v_1,v_k)$ and $d_{T'}(v_h,v_m)=d_{T}(v_h,v_m)$. By Lemma~\ref{lem:10}, $d_{T(v_k)}(u,v_k)\leq d_T(v_1,v_k)$ and $d_{T(v_h)}(v,v_h)\leq d_T(v_h,v_m)$.
Hence, we can derive
\begin{align*}
    w(v_{k}) + d_{P'}(v_{k}, v_{h}) + w(v_{h}) & = d_{T(v_{k})}(u, v_k) + d_{T'}(v_k, v_h) + d_{T(v_{h})}(v_h, v) \\
    & \leq d_T(v_1,v_k) + d_{T'}(v_k, v_h) + d_T(v_h,v_m)\\
    & = d_{T'}(v_1,v_k) + d_{T'}(v_k, v_h) + d_{T'}(v_h,v_m)\\
    & = d_{T'}(v_1,v_m)\leq \Delta(T').
\end{align*}

\item $\{i,j\}\cap \{k,h\}=\emptyset$; e.g., see Fig.~\ref{fig:reduction20}. In this case, the shortest path $\pi_{T'}(u,v)$ between $u$ and $v$ in $T'$ must contain $\pi_{T(v_k)}(u,v_k)$ and $\pi_{T(v_h)}(v_h,v)$. Hence, we have
\begin{align*}
    w(v_{k}) + d_{P'}(v_{k}, v_{h}) + w(v_{h}) & = d_{T(v_{k})}(u, v_k) + d_{T'}(v_k, v_h) + d_{T(v_{h})}(v_h, v) \\
    &= d_{T'}(u,v)\leq \Delta(T').
\end{align*}

\begin{figure}[t]
\begin{minipage}[t]{0.49\textwidth}
\begin{center}
\includegraphics[height=1.4in]{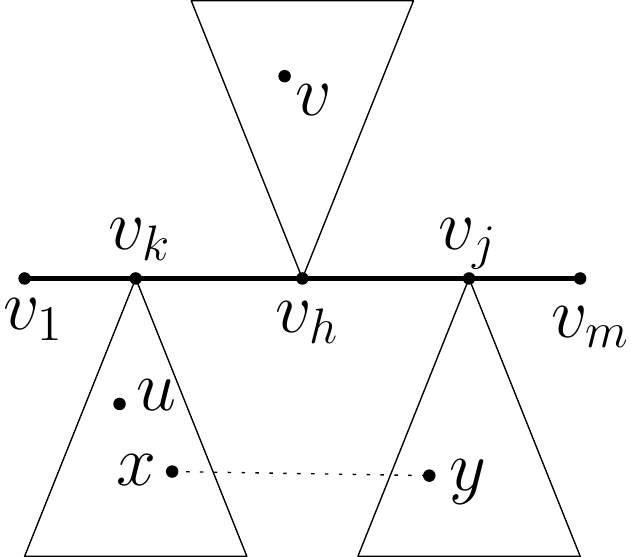}
\caption{\footnotesize Illustrating the case $i=k$ and $j\neq h$.}
\label{fig:reduction30}
\end{center}
\end{minipage}
\hspace{0.05in}
\begin{minipage}[t]{0.49\textwidth}
\begin{center}
\includegraphics[height=1.4in]{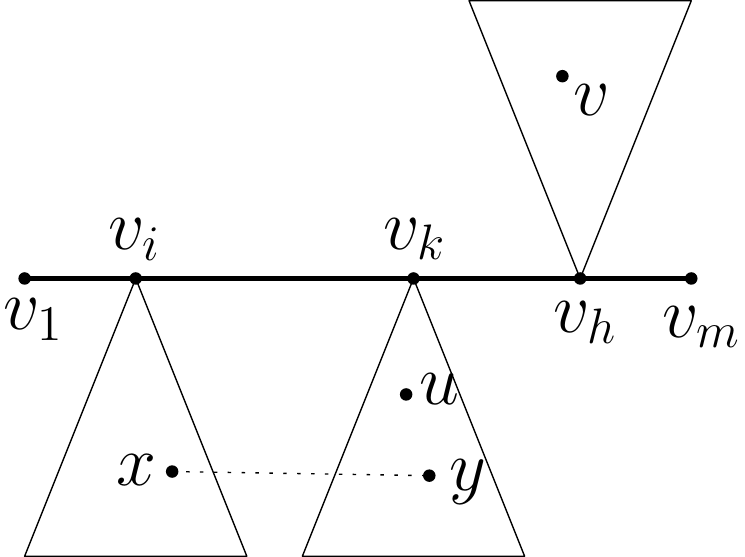}
\caption{\footnotesize Illustrating the case $j=k$.}
\label{fig:reduction40}
\end{center}
\end{minipage}
\vspace{-0.15in}
\end{figure}

\item $i=k$ and $j\neq h$; e.g., see Fig.~\ref{fig:reduction30}. In this case, the shortest path $\pi_{T'}(v_1,v)$ between $v_1$ and $v$ in $T'$ must contain $\pi_{T}(v_1,v_k)$ and $\pi_{T(v_h)}(v_h,v)$. Hence, we have
\begin{align*}
    w(v_{k}) + d_{P'}(v_{k}, v_{h}) + w(v_{h}) & = d_{T(v_{k})}(u, v_k) + d_{T'}(v_k, v_h) + d_{T(v_{h})}(v_h, v) \\
    & \leq d_T(v_1,v_k) + d_{T'}(v_k, v_h) + d_{T(v_{h})}(v_h, v)\\
    & = d_{T'}(v_1,v_k) + d_{T'}(v_k, v_h) + d_{T(v_{h})}(v_h, v)\\
    & = d_{T'}(v_1,v)\leq \Delta(T').
\end{align*}

\item $i\neq k$ and $j=h$. This case is symmetric to the above case. By a similar argument, we can show that $w(v_{k}) + d_{P'}(v_{k}, v_{h}) + w(v_{h})\leq \Delta(T')$.

\item $j=k$; e.g., see Fig.~\ref{fig:reduction40}. In this case, we claim that $\pi_{T(v_k)}(u,v_k)$ is $\pi_{T'}(u,v_k)$. Indeed, assume to the contrary that this is not true. Then, $\pi_{T'}(u,v_k)$ must contain the shortcut $e(x,y)$. This further implies that $\pi_{T'}(u,v_k)$ must contain $\pi_T(v_i,v_k)$, which is the subpath of $P$ between $v_i$ and $v_k$ (=$v_j$). Since $(x,y)$ is a critical pair of $(i,j)$, $\pi_T(v_i,v_k)$ is no shorter than the path $\pi'=\pi_{T(v_i)}(v_i,x)\cup e(x,y)\cup \pi_{T(v_k)}(y,v_k)$. Therefore, if we replace $\pi_T(v_i,v_k)$ by $\pi'$ in $\pi_{T'}(u,v_k)$, we can obtain another shortest path $\pi'_{T'}(u,v_k)$ from $u$ to $v_k$ in $T'$. However, $\pi'_{T'}(u,v_k)$ cannot be a shortest path as it contains $e(x,y)$ twice. This incurs contradiction and thus the claim follows.

Due to the claim, we have $d_{T'}(u,v)=d_{T(v_k)}(u,v_k)+d_{T'}(v_k,v_h)+d_{T(v_h)}(v_h,v)$. 
We can now obtain
\begin{align*}
    w(v_{k}) + d_{P'}(v_{k}, v_{h}) + w(v_{h}) & = d_{T(v_{k})}(u, v_k) + d_{T'}(v_k, v_h) + d_{T(v_{h})}(v_h, v) \\
    & = d_{T'}(u,v)\leq \Delta(T').
\end{align*}

\item $i=h$. This case is symmetric to the above case. By a similar argument, we can show that $w(v_{k}) + d_{P'}(v_{k}, v_{h}) + w(v_{h})\leq \Delta(T')$.
\end{enumerate}

This proves that $\Delta(P') \leq \Delta(T')$.

\medskip

Next we prove that $\Delta(T') \leq \Delta(P')$. Let $u$ and $v$ be any two vertices of $T$ with $u \in T(v_{k})$ and $v \in T(v_{h})$, with $1 \leq k \leq h \leq m$. It suffices to show that $d_{T'}(u,v)\leq \Delta(P')$.

If $k \neq h$, then we can deduce
    \begin{align*}
        d_{T'}(u, v) &\leq d_{T(v_{k})}(u,v_k) + d_{T'}(v_{k}, v_{h}) + d_{T(v_{h})}(v_{h}, v)\\
        &\leq w(v_{k}) + d_{T'}(v_{k}, v_{h}) + w(v_{h})\\
        &= w(v_{k}) + d_{P'}(v_{k}, v_{h}) + w(v_{h}) \leq \Delta(P').
    \end{align*}

\begin{figure}[t]
\begin{minipage}[t]{\textwidth}
\begin{center}
\includegraphics[height=1.4in]{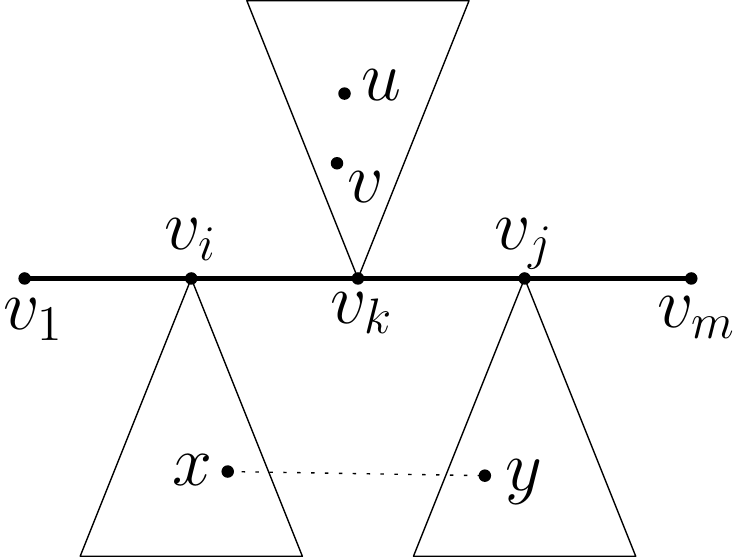}
\caption{\footnotesize Illustrating the case where both $u$ and $v$ are in $T(v_k)$.}
\label{fig:reduction70}
\end{center}
\end{minipage}
\vspace{-0.15in}
\end{figure}

If $k=h$, then both $u$ and $v$ are in $T(v_k)$; e.g., see Fig.~\ref{fig:reduction70}. Hence, it holds that $d_{T'}(u,v)\leq d_{T(v_k)}(u,v)\leq d_{T(v_k)}(v,v_k)+d_{T(v_k)}(v_k,u)$. By Lemma~\ref{lem:10}, $d_{T(v_k)}(v,v_k)\leq \min\{d_T(v_1,v_k),d_T(v_k,v_m)\}$. On the other hand, notice that the shortest path from $u$ to either $v_1$ or $v_m$ in $T'$ does not contain the shortcut $e(x,y)$. Without loss of generality, we assume that the shortest path $\pi_{T'}(u,v_1)$ from $u$ to $v_1$ in $T'$ does not contain $e(x,y)$.
Thus, $\pi_{T'}(u,v_1)$ must contain $\pi_{T}(v_1,v_k)$, implying that $\pi_{T}(v_1,v_k)$ is $\pi_{T'}(v_1,v_k)$. Note that $\pi_{T}(v_1,v_k)$ is $\pi_{P}(v_1,v_k)$, which is the portion of $P$ between $v_1$ and $v_k$. Hence, $\pi_{P}(v_1,v_k)$ is also the shortest path from $v_1$ to $v_k$ in $P'$ (i.e., $\pi_{P'}(v_1,v_k)$ does not contain the shortcut $\e(v_i,v_j)$). Therefore, $d_{P'}(v_1,v_k)=d_T(v_1,v_k)$. Since $u\in T(v_k)$, $w(v_k)\geq d_{T(v_k)}(v_k,u)$.
Combining all above we can derive
\begin{align*}
\Delta(P') &\geq w(v_1)+d_{P'}(v_1,v_k)+w(v_k) = w(v_1)+d_{T}(v_1,v_k)+w(v_k)\\
&\geq d_{T}(v_1,v_k)+w(v_k) \geq d_{T}(v_1,v_k)+d_{T(v_k)}(v_k,u)\\
&\geq d_{T(v_k)}(v,v_k)+d_{T(v_k)}(v_k,u) \geq d_{T'}(u,v).
\end{align*}

This proves the first lemma statement and thus the entire lemma.
\end{proof}

In light of Lemma~\ref{lemma:5}, we will focus on solving the DOAP problem on the vertex-weighted path $P$. Notice that the lengths of the shortcuts of $P$ have not been computed yet.


\subsection{Computing an optimal shortcut for $\boldsymbol{P}$}
\label{subsec:ComputingAnOptimalShortcutforADOAPInstance}

To find an optimal shortcut for the DOAP problem on $P$, for each $i\in [1, m-1]$, we will compute an index $j(i)$ that minimizes the diameter $\Delta(P+\e(v_i,v_j))$ among all indices $j\in[i+1, m]$, i.e., $j(i)=\arg \min_{i+1\leq j\leq m}\Delta(P+\e(v_i,v_j))$, as well as the diameter $\Delta(P+\e(v_i,v_{j(i)}))$. After that, the optimal shortcut of $P$ is the one that minimizes $\Delta(P+\e(v_i,v_{j(i)}))$ among the shortcuts $\e(i,j(i))$ for all $i\in [1,m-1]$. We refer to the shortcuts for $\e(v_i,v_j)$ for all $j\in [i+1,m]$ as {\em $v_i$-shortcuts}. Therefore, our goal is to find an optimal $v_i$-shortcut $\e(i,j(i))$ for each $i\in [1,m-1]$.

Let $n_i$ denote the number of vertices in $T(v_i)$, for each $1\leq i\leq m$. Note that $n=\sum_{i=1}^m n_i$.
Fix an index $i$ with $1\leq i\leq m-1$. In the following, we will present an algorithm that computes an optimal $v_i$-shortcut $\e(v_i,v_{j(i)})$ and the diameter $\Delta(P+\e(v_i,v_{j(i)}))$ in $O(n\cdot n_i + n\log n)$ time and $O(n)$ space. In this way, solving the DOAP problem on $P$ takes $O(n^2\log n)$ time and $O(n)$ space in total.

\begin{figure}[t]
        \centering
        \includegraphics[scale=0.4]{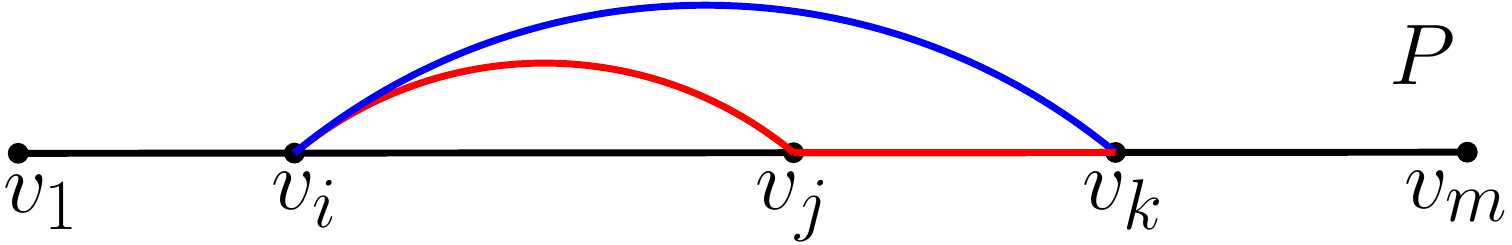}
        \caption{$\e(v_{i}, v_{j})$ dominates $\e(v_{i}, v_{k})$ if $|\e(v_{i}, v_{j})| + d_{P}(v_{j}, v_{k}) \leq |\e(v_{i}, v_{k})|$, i.e., the length of the red path is less than or equal to the length of the blue path.}
        \label{fig:VertexWeightedPath}
        \vspace{-0.15in}
\end{figure}

We introduce a domination relationship among $v_i$-shortcuts. 
\begin{definition}
For any two index $j$ and $k$ with $i < j < k \leq m$, we say that
$\e(v_{i}, v_{j})$ {\em dominates} $\e(v_{i}, v_{k})$ if $|\e(v_{i}, v_{j})| + d_{P}(v_{j}, v_{k}) \leq |\e(v_{i}, v_{k})|$; e.g., see Fig.~\ref{fig:VertexWeightedPath}.
\end{definition}

The following lemma implies that if $\e(v_i, v_j)$ dominates $\e(v_i, v_k)$, then shortcut $\e(v_i, v_k)$ can be ignored or ``pruned''.

\begin{lemma}
    \label{lemma:8}
    If $\e(v_i, v_j)$ dominates $\e(v_i, v_k)$, then $\Delta(P+\e(v_i,v_j)) \leq \Delta(P+\e(v_i,v_k))$.
\end{lemma}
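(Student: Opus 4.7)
My plan is to establish the stronger pointwise statement that for every pair of vertices $v_a, v_b$ on $P$,
\[
d_{P+\e(v_i,v_j)}(v_a,v_b) \;\leq\; d_{P+\e(v_i,v_k)}(v_a,v_b).
\]
Once this is proven, the lemma follows immediately: since the vertex weights $w(v_a), w(v_b)$ and the index set do not depend on which $v_i$-shortcut is used, the max that defines $\Delta(P+\e(v_i,v_j))$ is bounded termwise by the corresponding max that defines $\Delta(P+\e(v_i,v_k))$.

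To prove the pointwise inequality, I would consider the shortest path $\pi$ in $P+\e(v_i,v_k)$ realizing $d_{P+\e(v_i,v_k)}(v_a,v_b)$ and split into two cases according to whether $\pi$ uses the shortcut $\e(v_i,v_k)$. If $\pi$ does not use the shortcut, then $d_{P+\e(v_i,v_k)}(v_a,v_b)=d_P(v_a,v_b)\geq d_{P+\e(v_i,v_j)}(v_a,v_b)$ and we are done. Otherwise, $\pi$ crosses the shortcut once, and by symmetry I may assume it has the form $v_a \leadsto_P v_i \to_{\e(v_i,v_k)} v_k \leadsto_P v_b$, so that $d_{P+\e(v_i,v_k)}(v_a,v_b)=d_P(v_a,v_i)+|\e(v_i,v_k)|+d_P(v_k,v_b)$.

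In the latter case I will exhibit a competing walk in $P+\e(v_i,v_j)$, namely $v_a \leadsto_P v_i \to_{\e(v_i,v_j)} v_j \leadsto_P v_b$. Its length is at most $d_P(v_a,v_i)+|\e(v_i,v_j)|+d_P(v_j,v_b)$, and since $P$ is a path the triangle inequality gives $d_P(v_j,v_b)\leq d_P(v_j,v_k)+d_P(v_k,v_b)$. Substituting and then applying the domination hypothesis $|\e(v_i,v_j)|+d_P(v_j,v_k)\leq |\e(v_i,v_k)|$ yields
\[
d_{P+\e(v_i,v_j)}(v_a,v_b) \leq d_P(v_a,v_i)+|\e(v_i,v_k)|+d_P(v_k,v_b) = d_{P+\e(v_i,v_k)}(v_a,v_b),
\]
as desired. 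The symmetric sub-case (when $\pi$ traverses the shortcut from $v_k$ to $v_i$) is handled by the mirror choice of walk $v_a \leadsto_P v_j \to_{\e(v_i,v_j)} v_i \leadsto_P v_b$, together with the triangle inequality $d_P(v_a,v_j)\leq d_P(v_a,v_k)+d_P(v_k,v_j)$ and the same domination inequality.

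I do not expect a serious obstacle: the argument is short and the only place that uses the hypothesis is the single substitution that converts $|\e(v_i,v_j)|+d_P(v_j,v_k)$ into $|\e(v_i,v_k)|$. The mildest subtlety is making sure the "competing walk" argument is valid in both orientations of the shortcut; I will handle this by a symmetry remark so as to avoid duplicating the case analysis.
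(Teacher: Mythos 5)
Your proposal is correct and follows essentially the same route as the paper: both reduce to the pointwise inequality $d_{P+\e(v_i,v_j)}(u,v)\leq d_{P+\e(v_i,v_k)}(u,v)$, split on whether the shortest path in $P+\e(v_i,v_k)$ uses the shortcut, and in the nontrivial case chain the domination hypothesis with the path triangle inequality $d_P(v_j,v)\leq d_P(v_j,v_k)+d_P(v_k,v)$. The only cosmetic difference is that you write the chain starting from the competing walk in $P+\e(v_i,v_j)$ while the paper starts from the realized distance in $P+\e(v_i,v_k)$; both handle the reversed orientation of the shortcut by symmetry.
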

\begin{proof}
    Let $u$ and $v$ be any two vertices of $P$. To prove the lemma, it suffices to show that $d_{P+\e(v_i,v_j)}(u,v)\leq d_{P+\e(v_i,v_k)}(u,v)$.

    If $d_{P+\e(v_i,v_k)}(u,v)=d_P(u,v)$, then $d_{P+\e(v_i,v_j)}(u,v)\leq d_P(u,v)=d_{P+\e(v_i,v_k)}(u,v)$.
    Otherwise, the shortest path $\pi_{P+\e(v_i,v_k)}(u,v)$ contains the shortcut $\e(v_i,v_k)$. Hence, $d_{P + \e(v_i, v_k)}(u, v)$ is equal to either $d_{P}(u, v_i) + |\e(v_i, v_k)| + d_{P}(v_k, v)$ or $d_{P}(u, v_k) + |\e(v_i, v_k)| + d_{P}(v_i, v)$. We assume that it is the former case. Then we have
    \begin{align*}
        d_{P + \e(v_i, v_k)}(u, v) &=  d_{P}(u, v_i) + |\e(v_i, v_k)| + d_{P}(v_k, v)\\
        &\geq d_{P}(u, v_i) + |\e(v_i, v_j)| + d_P(v_j,v_k) + d_{P}(v_k, v) \\
        &\geq d_{P}(u, v_i) + |\e(v_i, v_j)| + d_{P}(v_j, v)\\
        &\geq d_{P + \e(v_i, v_j)}(u, v)
    \end{align*}
   The lemma thus follows.
\end{proof}

Let $S_i$ be the set of all $v_i$-shortcuts, i.e., $S_i=\{\e(v_i,v_j) \| \ i+1\leq j\leq m\}$. In the following, we describe a pruning algorithm that computes a subset $S$ of $S_i$ such that no two shortcuts of $S$ dominate each other and $S$ contains at least one optimal $v_i$-shortcut. As will be seen later, these properties of $S$ allow an efficient algorithm to find an optimal $v_i$-shortcut.

Before running the pruning algorithm, we compute the lengths of shortcuts of $S_i$ by brute force, as follows. First, with $O(n)$ time preprocessing, given any two vertices $u$ and $v$ with $u \in T(v_i)$ and $v\in T(v_j)$ for $j\neq i$, we can compute $d_T(u,v)$ in constant time. Consider a tree $T(v_j)$ with $j\geq i+1$. Computing the length of $\e(v_i,v_j)$ reduces to finding a critical pair of $(i,j)$. To this end, we compute $d_{T(v_{i})}(v_{i}, u) + |e(u, v)| + d_{T(v_{j})}(v, v_{j})$ for all vertices $u\in T(v_i)$ and all vertices $v\in T(v_j)$, which can be done in $O(n_i\cdot n_j)$ time (and $O(n)$ space). As such, computing the lengths of all shortcuts of $S_i$ takes $O(n_i\cdot n)$ time.

Our pruning algorithm processes the shortcuts $\e(v_i,v_j)$ for all $j=i+1,i+2,\ldots, m$ one by one. A stack $S$ is maintained and $S=\emptyset$ initially. Consider any $j\in [i+1,m]$. If $S=\emptyset$, then we push $\e(v_i,v_j)$ into $S$. Otherwise, let $\e$ be the shortcut at the top of $S$. If $\e$ and $\e(v_i,v_j)$ do not dominate each other, then we push $\e(v_i,v_j)$ into $S$. Otherwise, if $\e$ dominates $\e(v_i,v_j)$, then we proceed on $j+1$, i.e., $\e(v_i,v_j)$ is pruned. If $\e(v_i,v_j)$ dominates $\e$, then we pop $\e$ out of $S$ (i.e., $\e$ is pruned). Next, we keep popping the top element out of $S$ until either $S$ becomes $\emptyset$ or $\e(v_i,v_j)$ does not dominate it; in either case we push $\e(v_i,v_j)$ into $S$.

As the lengths of the shortcuts of $S_i$ are available, the algorithm runs in $O(n)$ time. The following lemma proves the correctness of the algorithm.


\begin{lemma}
After the algorithm, no two shortcuts of $S$ dominate each other and $S$ contains at least one optimal $v_i$-shortcut.
\end{lemma}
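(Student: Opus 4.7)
The plan is to prove the two claims simultaneously by induction on the processed shortcuts, after first establishing a transitivity observation that mirrors (but is simpler than) the observation used for Lemma~\ref{lem:pruning}. Specifically, the key calculation is: for any three shortcuts $\e(v_i,v_a),\e(v_i,v_b),\e(v_i,v_c)$ with $a<b<c$ (so that $d_P(v_a,v_c)=d_P(v_a,v_b)+d_P(v_b,v_c)$), (i) if $\e(v_i,v_a)$ dominates $\e(v_i,v_b)$ and $\e(v_i,v_b)$ dominates $\e(v_i,v_c)$, then $\e(v_i,v_a)$ dominates $\e(v_i,v_c)$; and (ii) if neither pair in $\{(\e(v_i,v_a),\e(v_i,v_b)),(\e(v_i,v_b),\e(v_i,v_c))\}$ dominates each other, then $\e(v_i,v_a)$ and $\e(v_i,v_c)$ do not dominate each other either. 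Both facts follow by adding the two defining inequalities and using the collinearity of $v_a,v_b,v_c$ on $P$.

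Armed with this, I would prove the following invariant: after processing any prefix of shortcuts, (I1) no two shortcuts of $S$ dominate each other, and (I2) every processed shortcut not currently in $S$ is dominated by some shortcut in $S$. The base case is trivial. For the inductive step, I would analyze the three branches of the algorithm when processing $\e(v_i,v_j)$. If $S=\emptyset$ or the top $\e$ does not dominate and is not dominated by $\e(v_i,v_j)$, the push preserves (I1) because, for any deeper stack element $\e'$, (I1) previously gave non-domination between $\e'$ and $\e$, the current check gives non-domination between $\e$ and $\e(v_i,v_j)$, and observation~(ii) above propagates non-domination to the pair $(\e',\e(v_i,v_j))$; (I2) is preserved since no elements left $S$. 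If the current top dominates $\e(v_i,v_j)$, then $\e(v_i,v_j)$ is not pushed and (I2) is immediately witnessed by that top. The interesting case is when $\e(v_i,v_j)$ dominates the top and triggers a sequence of pops; here I would argue that the final top $\e^{cur}$ after popping cannot dominate $\e(v_i,v_j)$, for otherwise observation~(i) combined with the fact that $\e(v_i,v_j)$ dominated some popped element $\e^{pop}$ just above $\e^{cur}$ would force $\e^{cur}$ to dominate $\e^{pop}$, contradicting the prior (I1). Combined with the loop condition that $\e(v_i,v_j)$ does not dominate $\e^{cur}$, this gives non-domination between the new top and the pushed element, and the same transitive argument propagates it to deeper elements, preserving (I1). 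To preserve (I2), any previously-unpushed or previously-popped shortcut dominated by a now-popped $\e^{pop}$ is, by observation~(i) applied with $\e(v_i,v_j)$ dominating $\e^{pop}$, dominated by $\e(v_i,v_j)$, which is in the new stack; anything dominated by a non-popped element retains its witness.

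With the invariant in hand, claim~(1) of the lemma is immediate from (I1) at termination. For claim~(2), let $\e^*$ be any optimal $v_i$-shortcut. If $\e^*\in S$, we are done. Otherwise by (I2) some $\e\in S$ dominates $\e^*$, and Lemma~\ref{lemma:8} gives $\Delta(P+\e)\le \Delta(P+\e^*)$. Since $\e^*$ is optimal, equality holds and $\e$ is itself an optimal $v_i$-shortcut contained in $S$.

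The main obstacle is the pop-chain case: one must verify that after an arbitrary number of pops, the new top does not dominate the pushed shortcut. This is where the transitivity observation is indispensable, because the algorithm only explicitly tests domination in one direction (whether $\e(v_i,v_j)$ dominates the top), so the reverse direction must be inferred from the prior invariant rather than checked.
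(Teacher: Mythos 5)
Your proposal is correct and follows essentially the same route as the paper: the same two-part transitivity observation, the same inductive case analysis of the push/pop branches (including the key argument that after the pop chain the new top cannot dominate the pushed shortcut, since otherwise it would dominate a popped element and contradict the prior invariant), and the same use of Lemma~\ref{lemma:8} to conclude that a dominating witness of an optimal shortcut is itself optimal. Your explicit invariant (I2) is a slightly cleaner packaging of what the paper argues informally ("we can inductively show that $S$ contains an optimal $v_i$-shortcut"), but the substance is identical.
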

\begin{proof}
We first prove the following observation, which will be used for proving the lemma.
\subparagraph{\em Observation.}
    {\em Consider any three indices $j$, $k$, and $h$ with $i<j<k<h\leq m$.
    \begin{enumerate}
        \item If $\e(v_i, v_{j})$ dominates $\e(v_i, v_{k})$, and $\e(v_i, v_{k})$ dominates $\e(v_i, v_{h})$, then $\e(v_i, v_{j})$ dominates $\e(v_i, v_{h})$.
        \item If $\e(v_i, v_{j})$ and $\e(v_i, v_{k})$ do not dominate each other, and $\e(v_i, v_{k})$ and $\e(v_i, v_{h})$ do not dominate each other, then $\e(v_i, v_{j})$ and $\e(v_i, v_{h})$ do not dominate each other.
    \end{enumerate}}
\subparagraph{Proof of the observation.}
Since $i<j<k<h$, $d_P(v_j,v_k)+d_P(v_k,v_h)=d_P(v_j,v_h)$.
    \begin{enumerate}
        \item If $\e(v_i, v_{j})$ dominates $\e(v_i, v_{k})$, and $\e(v_i, v_{k})$ dominates $\e(v_i, v_{h})$, then we have
        \begin{align*}
            |\e(v_i, v_{j})| + d_{P}(v_{j}, v_{h}) &= |\e(v_i, v_{j})| + d_{P}(v_{j}, v_{k}) + d_{P}(v_{k}, v_{h}) \\
            &\leq |\e(v_i, v_{k})| + d_{P}(v_{k}, v_{h}) \leq |\e(v_i, v_{h})|.
        \end{align*}
        Hence, $\e(v_i, v_{j})$ dominates $\e(v_i, v_{h})$.

        \item If $\e(v_i, v_{j})$ and $\e(v_i, v_{k})$ do not dominate each other, and $\e(v_i, v_{k})$ and $\e(v_i, v_{h})$ do not dominate each other, then we have
        \begin{align*}
            |\e(v_i, v_{j})| + d_{P}(v_{j}, v_{h}) &= |\e(v_i, v_{j})| + d_{P}(v_{j}, v_{k}) + d_{P}(v_{k}, v_{h}) \\
            & > |\e(v_i, v_{k})| + d_{P}(v_{k}, v_{h})> |\e(v_i, v_{h})|.
        \end{align*}
        Hence, $e(v_i, v_{j})$ does not dominate $e(v_i, v_{h})$.
        \begin{align*}
            |\e(v_i, v_{h})| + d_{P}(v_{j}, v_{h}) &= |\e(v_i, v_{h})| + d_{P}(v_{k}, v_{h}) + d_{P}(v_{j}, v_{k}) \\
            &> |\e(v_i, v_{k})| + d_{P}(v_{j}, v_{k}) > |\e(v_i, v_{j})|.
        \end{align*}
        Hence, $\e(v_i, v_{h})$ does not dominate $e(v_i, v_{j})$.

        Therefore, $\e(v_i, v_{j})$ and $\e(v_i, v_{h})$ do not dominate each other.
    \end{enumerate}
This proves the observation.
\medskip

We are now in a position to prove the lemma.

We first show that no two shortcuts in $S$ dominate each other by mathematical induction. Initially the statement holds, for $S=\emptyset$. We assume that the statement holds right before $\e(v_i,v_j)$ is processed.  Let $S$ refer to the stack right before $\e(v_i,v_j)$ is processed; let $\e$ be the shortcut at the top of $S$ if $S\neq \emptyset$. According to our algorithm, $\e(v_i,v_j)$ is pushed into $S$ in the following cases.
\begin{enumerate}
    \item $S=\emptyset$. In this case, $S$ has only one shortcut after $\e(v_i,v_j)$ is pushed in. Hence, the statement trivially holds.

    \item $S\neq \emptyset$, and $\e$ and $\e(v_i,v_j)$ do not dominate each other. Let $\e'$ be any shortcut in $S$. To prove the statement holds on $S$ after $\e(v_i,v_j)$ is pushed in, it suffices to show that $\e'$ and $\e(v_i,v_j)$ do not dominate each other. If $\e'=\e$, then this is obviously true. Otherwise, by the induction hypothesis, $\e'$ and $\e$ do not dominate each other. As $\e$ and $\e(v_i,v_j)$ do not dominate each other, by the above Observation, $\e'$ and $\e(v_i,v_j)$ do not dominate each other.

    \item $S\neq \emptyset$, and $\e(v_i,v_j)$ dominates $\e$. In this case, $\e$ is popped out of $S$, and afterwards, the algorithm keeps popping out the top element of $S$ until either $S$ becomes $\emptyset$ or $e(v_i,v_j)$ does not dominate the current top element of $S$, denoted by $\e''$. In the former case, the statement trivially holds. In the latter case, we claim that $\e''$ does not dominate $\e(v_i,v_j)$. Indeed, by the induction hypothesis, $\e$ and $\e''$ do not dominate each other. Since $\e(v_i,v_j)$ dominates $\e$, $\e''$ cannot dominate $\e(v_i,v_j)$ since otherwise $\e''$ would dominate $\e$ by the above Observation. Then, following the same argument as the above case, $\e(v_i,v_j)$ and $\e'$ do not dominate each other for any $e'$ in the current stack $S$. The statement thus holds after $\e(v_i,v_j)$ is pushed into $S$.
\end{enumerate}
This proves that no two shortcuts in $S$ dominate each other.

We next show that $S$ contains at least one optimal $v_i$-shortcut. Let $\e^*$ be an optimal $v_i$-shortcut. If $\e^*$ is in $S$, then we are done with the proof. Otherwise, according to our algorithm, $\e^*$ must be dominated by another shortcut $\e$ of $S_i$. By Lemma~\ref{lemma:8}, $\e$ is also an optimal $v_i$-shortcut. If $\e$ is in $S$, then we are done with the proof. Otherwise, we can inductively show that $S$ contains an optimal $v_i$-shortcut.

This completes the proof of the lemma.
\end{proof}

Using the algorithm for Theorem~\ref{theorem:1} as a subroutine,
the following lemma provides a binary search algorithm that finds an optimal $v_i$-shortcut from $S$ in $O(n\log n)$ time and $O(n)$ space.

\begin{lemma}
    \label{lemma:10}
     An optimal $v_i$-shortcut in $S$ can be found in $O(n\log n)$ time and $O(n)$ space.
\end{lemma}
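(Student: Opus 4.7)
The plan is to binary-search over the shortcuts retained in $S$. Index them by increasing right endpoint as $\e_1 = \e(v_i,v_{j_1}), \ldots, \e_t = \e(v_i,v_{j_t})$ with $i < j_1 < \cdots < j_t \le m$, and let $f(k) := \Delta(P + \e_k)$. To evaluate $f(k)$ for a single $k$, I would build the graph $P + \e_k$ and encode each vertex weight $w(v_\ell)$ by attaching a pendant edge of length $w(v_\ell)$ to $v_\ell$; the resulting graph is still a unicycle graph on $O(n)$ vertices, and its graph-theoretic diameter equals $\max_{a<b}(w(v_a) + d_{P+\e_k}(v_a,v_b) + w(v_b)) = f(k)$. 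By Theorem~\ref{theorem:1}, $f(k)$ is then computable in $O(n)$ time and $O(n)$ space, and the workspace can be reused across evaluations.

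The crux of the lemma is to show that $f$ is unimodal on $[1,t]$, i.e., that there exists $k^\ast$ with $f$ non-increasing on $[1,k^\ast]$ and non-decreasing on $[k^\ast,t]$. The non-domination property of $S$ delivers, for every $k<t$, the one-sided Lipschitz bound $|\e_{k+1}| < |\e_k| + d_P(v_{j_k},v_{j_{k+1}})$. For any pair $(v_a,v_b)$ with $a\le i$ and $b \ge j_{k+1}$, the shortcut-using length $L_k(a,b) := d_P(v_a,v_i) + |\e_k| + d_P(v_{j_k},v_b)$ then satisfies
\begin{align*}
L_k(a,b) - L_{k+1}(a,b) = \bigl(|\e_k| + d_P(v_{j_k},v_{j_{k+1}})\bigr) - |\e_{k+1}| > 0,
\end{align*}
so the ``straddling'' contribution to the diameter is strictly decreasing in $k$. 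The remaining contributions either are independent of $k$ (pairs whose shortest realization avoids $\e_k$, in particular pairs with both endpoints in $[v_1,v_i]$ or both in $[v_{j_t},v_m]$) or can be shown non-decreasing in $k$ by a symmetric application of the Lipschitz bound (pairs involving vertices interior to the cycle of $P + \e_k$, which get pushed further from the shortcut as $j_k$ moves right). Thus $f$ is the pointwise maximum of a non-increasing and a non-decreasing envelope, hence unimodal.

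Given unimodality, standard binary search on $[1,t]$ locates $\arg\min_k f(k)$: at each step, pick a midpoint $k$, compute $f(k)$ and $f(k+1)$ via Theorem~\ref{theorem:1}, and restrict to $[1,k]$ or $[k+1,t]$ according to whether $f(k) \le f(k+1)$ or not. After $O(\log t) = O(\log n)$ iterations only one index remains, yielding an optimal $v_i$-shortcut in $S$. Each iteration performs $O(1)$ evaluations of $f$, so the total running time is $O(n\log n)$ and the peak working memory is $O(n)$ (the path $P$, the set $S$, and the transient pendant-augmented unicycle graph).

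The main obstacle is the rigorous proof of unimodality. Because $f(k)$ is the maximum over $\Theta(n^2)$ weighted pair-distances in $P + \e_k$, the realizing pair can change abruptly with $k$, and the cycle of $P + \e_k$ itself grows as $k$ grows, so even pairs that always use the shortcut do not share a single monotonic trend. The proof will require a careful case analysis based on the position of the diametral pair's endpoints relative to $v_i$ and $v_{j_k}$, applying the one-sided Lipschitz bound from non-domination in each case, in order to verify that inside every case the contribution is monotone in $k$. Only after that can one conclude that the overall envelope $f$ is unimodal and the binary search above is correct.
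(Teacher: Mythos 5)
Your overall strategy is the same as the paper's: use the non-domination property of $S$ to decompose $\Delta(P+\e(v_i,v_j))$ into monotone pieces, conclude that the diameter is quasi-convex in $j$, and binary-search with an $O(n)$-time diameter evaluation per probe (your pendant-edge encoding is a reasonable substitute for the paper's route of evaluating $\Delta(i,j)$ as $\Delta(T+e(x,y))$ via Lemma~\ref{lemma:5}). But the heart of the lemma is exactly the part you defer. You verify monotonicity only for the ``straddling'' pairs with $a\le i$ and $b\ge j_{k+1}$, and assert that the remaining contributions are either constant or ``can be shown non-decreasing by a symmetric application of the Lipschitz bound.'' That is where the real work lives: the paper splits $\Delta(i,j)$ into five functions $\alpha,\beta,\gamma,\delta,\lambda$ according to the position of the diametral pair relative to $i$ and $j$ (after first arguing that diametral endpoints in $(1,i]$ or $[j,m)$ can be replaced by $v_1$ or $v_m$), and each of $\beta$, $\gamma$, $\delta$ involves a $\min$ of two routes (through the shortcut or around it) maximized over interior vertices; $\gamma$ in particular needs a further case split on whether the maximizing vertex lies in $(i,j+1)$ or in $[j+1,m)$, because as $j$ grows the set of vertices that can profitably use the shortcut changes. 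None of this follows from the single displayed inequality you give, so as written the unimodality claim is unproven.

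A second, smaller problem: even granting that $f$ is the pointwise maximum of a non-increasing and a non-decreasing envelope (hence quasi-convex), your branching rule ``recurse on $[1,k]$ iff $f(k)\le f(k+1)$'' is unsound in the presence of plateaus. For example $f=(5,5,5,1)$ is such a maximum (take the non-decreasing envelope identically $0$), yet at the midpoint you would discard the true minimizer. Plateaus genuinely arise here because one of the five functions, $\lambda$, is constant in $j$. The paper sidesteps this by exploiting that Theorem~\ref{theorem:1} returns a diametral pair, from which one can identify \emph{which} of the five functions attains the maximum at $k$ and at $k+1$, and then branch according to that function's known monotonicity direction (terminating immediately when $\lambda$ is active). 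You should either adopt that mechanism or strengthen your monotonicity statements so as to rule out ties.
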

\begin{proof}
We first prove some properties that our algorithm relies on.

Consider the graph $P+\e(v_i,v_j)$ for an index $j$ with $i<j\leq m$. By slightly abusing the notation, let $\Delta(i,j)=\Delta(P+\e(v_i,v_j))$. Suppose $(v_{a},v_b)$ is a diametral pair of $P+\e(v_i,v_j)$ with $a<b$. Then, $\Delta(i,j)=w(v_a)+d_{P+\e(v_i,v_j)}(v_a,v_b)+w(v_b)$.

We claim that if $a\in (1, i]$, then $(v_1,v_b)$ is also a diametral pair. Indeed, since $a\neq 1$, by Lemma~\ref{lem:10} and the definition of $w(v_a)$, we have $w(v_a)\leq d_P(v_1,v_a)$. Hence, we can derive
\begin{align*}
w(v_a) + d_{P+\e(v_i,v_j)}(v_a, v_b) + w(v_b) &\leq d_{P}(v_{1}, v_a) + d_{P+\e(v_i,v_j)}(v_a, v_b) + w(v_b)\\
& = d_{P+\e(v_i,v_j)}(v_1, v_b) + w(v_b) \\
& \leq w(v_1) + d_{P+\e(v_i,v_j)}(v_1, v_b) + w(v_b).
\end{align*}
Hence, $(v_1,v_b)$ is also a diametral pair.

Similarly, we claim that if $b\in [j,m)$, then $(v_a,v_m)$ is also a diametral pair. The claim can be proved by a similar argument as above.

Note that since $a<b$, $a\neq m$ and $b\neq 1$.
Due to the above two claims, we assume that $a \in \{1\} \cup (i, m)$ and $b \in (1, j)\cup \{m\}$.
Based on the values of $a$ and $b$, we define the following five functions (e.g., see Fig.~\ref{fig:SixFunctions}).

    \begin{figure}[t]
        \begin{subfigure}[t]{0.5\linewidth}
        \centering
        \includegraphics[scale=0.45]{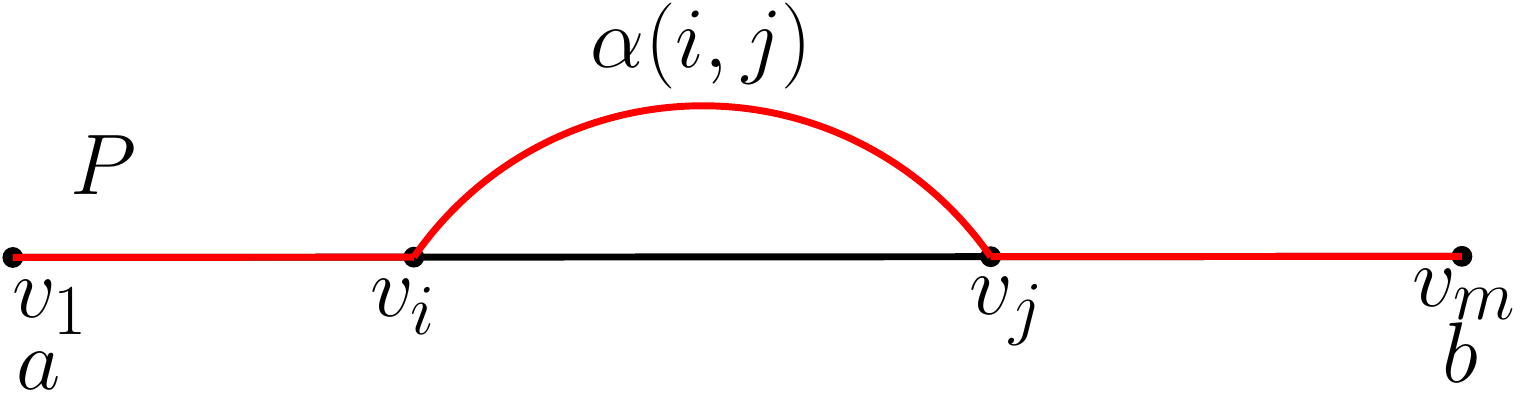}
        \caption{}
        \end{subfigure} \hfill
        \begin{subfigure}[t]{0.5\linewidth}
        \centering
        \includegraphics[scale=0.45]{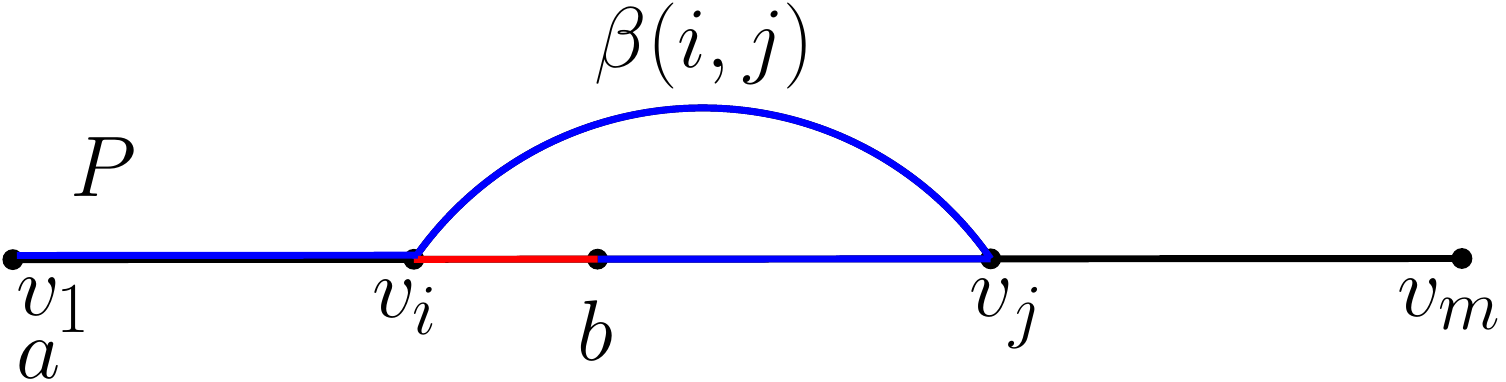}
        \caption{}
        \end{subfigure} \vfill
        \begin{subfigure}[t]{0.5\linewidth}
        \centering
        \includegraphics[scale=0.45]{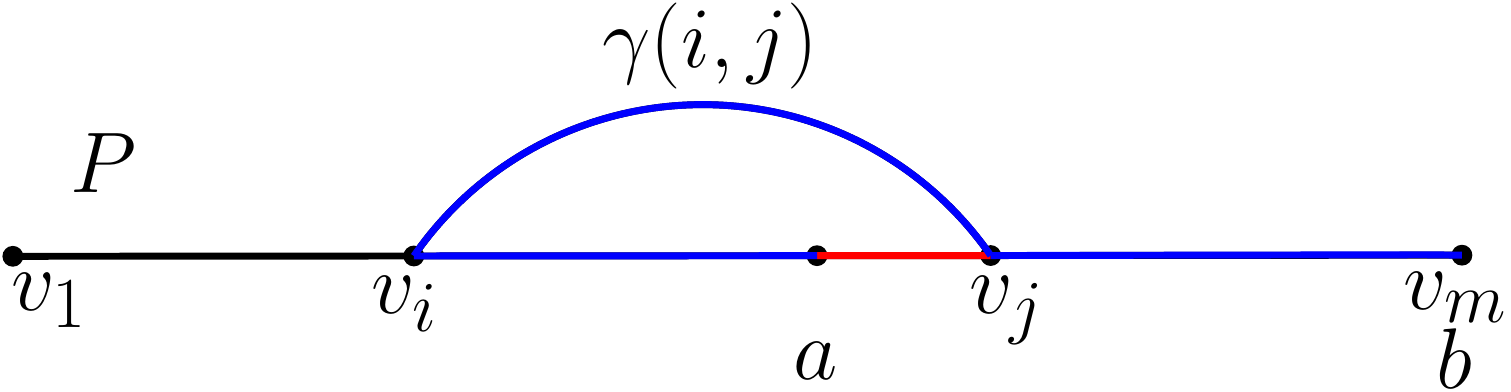}
        \caption{}
        \end{subfigure} \hfill
        \begin{subfigure}[t]{0.5\linewidth}
        \centering
        \includegraphics[scale=0.45]{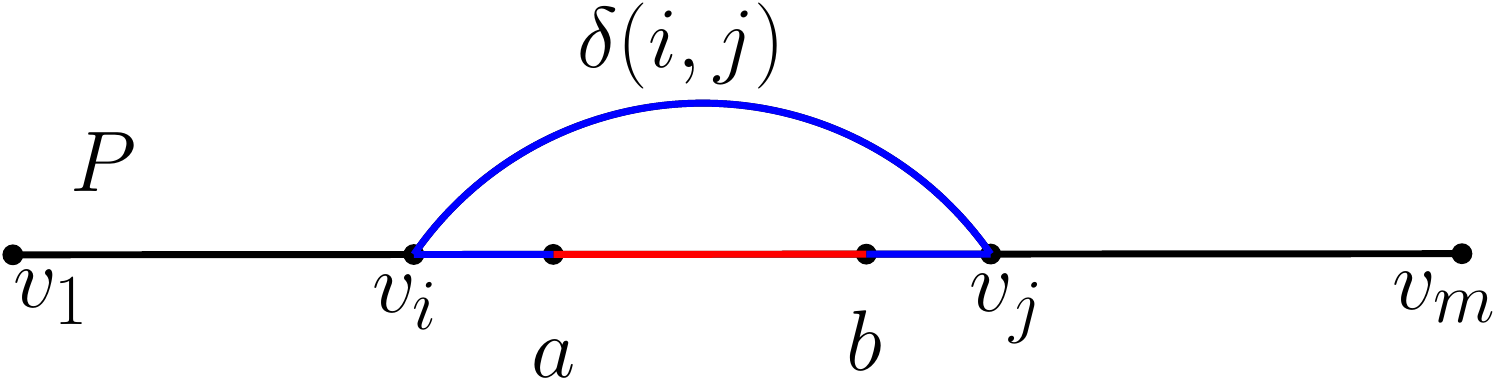}
        \caption{}
        \end{subfigure} \vfill
        \begin{subfigure}[t]{0.5\linewidth}
        \centering
        \includegraphics[scale=0.45]{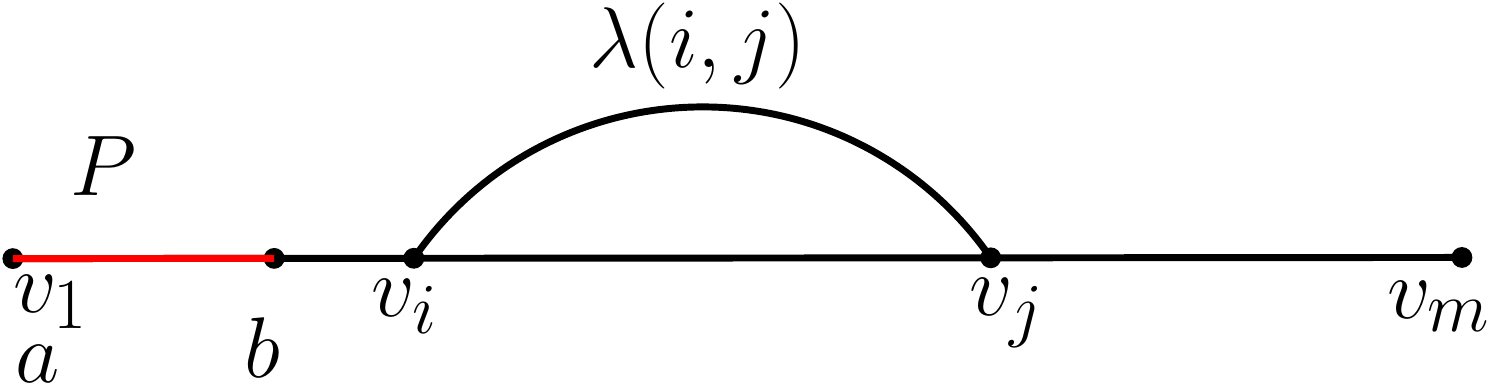}
        \caption{}
        \end{subfigure} \hfill
        \caption{(a) $\alpha(i, j)$ is the distance between $v_{1}$ and $v_{m}$. (b) $\beta(i, j)$ is the maximum distance between $v_{1}$ and all vertices on $(i, j)$. (c) $\gamma(i, j)$ is the maximum distance between $v_{m}$ and all vertices on $(i, m)$. (d) $\delta(i, j)$ is the maximum distance between two vertices on $(i, j)$. (e) $\lambda(i, j)$ is the maximum distance between $v_{1}$ and all vertices on $(1, i]$.}
        \label{fig:SixFunctions}
        \vspace{-0.15in}
    \end{figure}

\begin{enumerate}
    \item For the case $a=1$ and $b=m$, we define
    \begin{align*}
      \alpha(i, j) = w(v_1) + d_{P}(v_{1}, v_{i}) + |\e(v_i, v_j)| + d_{P}(v_{j}, v_{m})+w(v_m).
    \end{align*}
    Hence, if $a=1$ and $b=m$, we have $\Delta(i,j)=\alpha(i,j)$.

    \item For the case $a=1$ and $b\in (i,j)$, we define
    \begin{align*}
      \beta(i, j) = w(v_1)+\max_{i < b' < j} \bigg\{\min \{d_{P}(v_{1}, v_{b'}), d_{P}(v_{1}, v_{i}) + |\e(v_i, v_j)| + d_{P}(v_{j}, v_{b'}) + w(v_{b'})\}\bigg\}.
    \end{align*}
    Hence, if $a=1$ and $b\in (i,j)$, we have $\Delta(i,j)=\beta(i,j)$.


    \item For the case $a\in (i,m)$ and $b=m$, we define
    \begin{align*}
    \gamma(i, j)=\max_{i<a'<m} \bigg\{w(v_{a'})+d_{P+\e(v_i,v_j)}(v_{a'},v_m) \bigg\}+w(v_m).
    \end{align*}
    Note that $d_{P+\e(v_i,v_j)}(v_{a'},v_m)$ is equal to $\min \{d_{P}(v_{a'},v_m), d_{P}(v_{a'}, v_{i}) + |\e(v_i, v_j)| + d_{P}(v_{j}, v_{m})\}$ if $a'\in (i,j)$, and $d_{P}(v_{a'},v_m)$ otherwise.

    Hence, if $a\in (i,m)$ and $b=m$, we have $\Delta(i,j)=\gamma(i,j)$.


    \item For the case $i<a<b<j$, we define
    \begin{align*}
       \delta(i, j) = \max_{i < a' < b' < j} \bigg\{w(v_{a'}) + \min \{d_{P}(v_{a'}, v_{b'}), d_{P}(v_{a'},v_{i}) + |\e(v_i, v_j)| + d_{P}(v_{j}, v_{b'})\}+ w(v_{b'})\bigg\}.
    \end{align*}
     Hence, if $a,b\in (i,j)$, we have $\Delta(i,j)=\delta(i,j)$.


     \item For the case $a=1$ and $b\in (1,i]$, we define
     \begin{align*}
       \lambda(i, j) = \max_{1 < b' \leq i} \bigg\{w(v_{1}) + d_{P}(v_{1}, v_{b'}) + w(v_{b'})\bigg\}.
     \end{align*}
     Hence, if $a=1$ and $b\in (1,i]$, we have $\Delta(i,j)=\lambda(i,j)$.
\end{enumerate}

With the above definitions, we have $$\Delta(i, j) = \max \{\alpha(i,
j), \beta(i, j), \gamma(i, j), \delta(i, j), \lambda(i, j)\}.$$ Hence, if $j$ changes in $[i+1,m]$, the graph of $\Delta(i,j)$ is the upper envelope of the graphs of the five functions.

Recall that our goal is to find an optimal $v_i$-shortcut in $S$. Let $I$ denote the set of the indices $j$ of all shortcuts $\e(v_i,v_j)$ of $S$. We consider these indices of $I$ in order. We intend to show that $\Delta(i,j)$ is a unimodal function (first decreases and then increases) as $j$ changes in $I$. To this end, we prove that each of the above five functions is a monotonically increasing or decreasing function as $j$ changes in $I$. Note that each index of $I$ is in $[i+1,m]$. To simplify the notation, we simply let $I=\{i+1,i+2,\ldots, m\}$, or  equivalently, one may consider that our pruning algorithm does not prune any shortcut from $S_i$ and thus $S=S_i$.

As no two shortcuts of $S$ dominate each other, $\e(v_i,v_j)$ and $\e(v_i,v_{j+1})$ do not dominate each other for any $j\in (i,m)$, i.e., $|\e(v_i,v_j)|+d_P(v_j,v_{j+1})>|\e(v_i,v_{j+1})|$ and $|\e(v_i,v_{j+1})|+d_P(v_j,v_{j+1})>|\e(v_i,v_{j})|$. Relying on this property, we prove monotonicity properties of the five functions as follows. Consider any index $j\in (i,m)$.
\begin{enumerate}
    \item $\alpha(i, j) > \alpha(i, j + 1)$.

    {\em Proof: }
    \begin{align*}
    \alpha(i, j + 1) &= w(v_1) + d_{P}(v_{1}, v_{i}) + |\e(v_i, v_{j+1})| + d_{P}(v_{j+1}, v_{m}) + w(v_m) \\
    &< w(v_1)+d_{P}(v_{1}, v_{i}) + \e(v_i, v_j) + d_{P}(v_{j}, v_{j + 1}) + d_{P}(v_{j + 1}, v_{m})+w(v_m)\\
    &= w(v_1)+d_{P}(v_{1}, v_{i}) + \e(v_i, v_j) + d_{P}(v_{j}, v_{m}) +w(v_m)\\
    & = \alpha(i, j).
    \end{align*}

    Hence, $\alpha(i,j)$ is monotonically decreasing as $j$ increases.

    \item $\beta(i, j) \leq \beta(i, j + 1)$.

    {\em Proof: }
    For each $b'\in (i,j)$, we have
    \begin{align*}
        &\min \{d_{P}(v_{1}, v_{b'}), d_{P}(v_{1}, v_{i}) + |\e(v_i, v_j)| + d_{P}(v_{j}, v_{b'})\}  \\
        & \leq \min \{d_{P}(v_{1}, v_{b'}), d_{P}(v_{1}, v_{i}) + |\e(v_i, v_{j+1})| + d_P(v_{j+1},v_j)+d_{P}(v_{j}, v_{b'})\} \\
        & =\min \{d_{P}(v_{1}, v_{b'}), d_{P}(v_{1}, v_{i}) + |\e(v_i, v_{j+1})| + d_P(v_{j+1}, v_{b'})\}.
    \end{align*}
    Hence, $\beta(i, j) \leq \beta(i, j + 1)$ holds and $\beta(i,j)$ is monotonically increasing as $j$ increases.

        \item $\gamma(i, j) \geq \gamma(i, j + 1)$.

         {\em Proof: }

        Let $a'$ be the index such that $\gamma(i,j+1) = w(v_{a'})+d_{P+\e(v_i,v_{j+1})}(v_{a'},v_m)+w(v_m)$.
        Depending on whether $a'\in (i,j+1)$ or $a'\in [j+1,m)$, there are two cases.

        \begin{itemize}
            \item If $a'\in [j+1,m)$, then
            \begin{align*}
            \gamma(i, j+1) = w(v_{a'}) + d_{P}(v_{a'},v_m)+w(v_m).
            \end{align*}
            On the other hand, $d_{P+\e(v_i,v_j)}(v_{a'},v_m)=d_{P}(v_{a'},v_m)$. Hence,
            $\gamma(i, j) \geq w(v_{a'}) + d_{P}(v_{a'},v_m)+w(v_m)=\gamma(i,j+1)$.

            \item If $a'\in (i,j+1)$, then
            \begin{align*}
            \gamma(i, j+1) = w(v_{a'}) + \min \{d_{P}(v_{a'},v_m), d_{P}(v_{a'}, v_{i}) + |\e(v_i, v_{j+1})| + d_{P}(v_{j+1}, v_{m})\}+w(v_m).
            \end{align*}
            Since $\e(v_i,v_j)$ and $\e(v_i,v_{j+1})$ do not dominate each other, we have
        \begin{align*}
            & d_{P}(v_{a'},v_{i}) + |\e(v_i, v_{j+1})| + d_{P}(v_{j+1}, v_{m}) \\
            & < d_{P}(v_{a'},v_{i}) + |\e(v_i, v_j)| + d_P(v_j,v_{j+1}) + d_{P}(v_{j+1}, v_{m}) \\
            & = d_{P}(v_{a'},v_{i}) + |\e(v_i, v_j)| + d_P(v_j, v_{m}).
        \end{align*}
        Hence,
        \begin{align*}
            \gamma(i, j+1) &\leq w(v_{a'}) + \min \{d_{P}(v_{a'},v_m), d_{P}(v_{a'}, v_{i}) + |\e(v_i, v_{j})| + d_{P}(v_{j}, v_{m})\}+w(v_m).
        \end{align*}
        If $a'\in (i,j)$, then we have $\gamma(i, j) \geq w(v_{a'}) +
		\min \{d_{P}(v_{a'},v_m), d_{P}(v_{a'}, v_{i}) + |\e(v_i,
		v_{j})| + d_{P}(v_{j}, v_{m})\}+w(v_m)\geq \gamma(i,j+1)$;
		otherwise, $\gamma(i, j) \geq  w(v_{a'}) +
		d_{P}(v_{a'},v_m)+w(v_m) \geq w(v_{a'}) + \min
		\{d_{P}(v_{a'},v_m), d_{P}(v_{a'}, v_{i}) + |\e(v_i, v_{j})| +
		d_{P}(v_{j}, v_{m})\}+w(v_m)\geq \gamma(i,j+1)$. As such, in either case $\gamma(i, j) \geq \gamma(i, j + 1)$ holds.
        \end{itemize}

Hence, $\gamma(i,j)$ is monotonically decreasing as $j$ increases.


        \item $\delta(i, j) \leq \delta(i, j + 1)$.

         {\em Proof: } For each $i < a' < b' < j$, we have,
        \begin{align*}
            & \min \{d_{P}(v_{a'}, v_{b'}), d_{P}(v_{a'},v_{i}) + |\e(v_i, v_j)| + d_{P}(v_{j}, v_{b'})\}\\
            &\leq \min \{d_{P}(v_{a'}, v_{b'}), d_{P}(v_{a'},v_{i}) + |\e(v_i, v_{j+1})| + d_{P}(v_{j+1},v_j) + d_{P}(v_{j}, v_{b'})\} \\
            &\leq \min \{d_{P}(v_{a'}, v_{b'}), d_{P}(v_{a'},v_{i}) + |\e(v_i, v_{j+1})| + d_{P}(v_{j+1}, v_{b'})\}.
        \end{align*}
        Hence, $\delta(i, j) \leq \delta(i, j + 1)$ and $\delta(i,j)$
		is monotonically increasing as $j$ increases.

        \item $\lambda(i, j) = \lambda(i, j + 1)$.

         {\em Proof: } By definition, $\lambda(i,j)$ is
		 constant for all indices $j$. Thus, $\lambda(i, j) = \lambda(i, j + 1)$.

    \end{enumerate}
    \medskip

    On the basis of the above monotonicity properties of the
	functions, we present a binary search algorithm that finds an
	optimal $v_i$-shortcut in $O(n \log n)$ time and $O(n)$ space.

Our algorithm performs binary search on the indices $[l,r]$, with $l =
i + 1$ and $r = m$ initially. In each step, we decide whether we will
proceed on $[k,r]$ or on $[l,k]$, where $k = \lfloor \frac{l + r}{2}
\rfloor$. To this end, we compute $\Delta(i, k)$ and $\Delta(i, k +
1)$. By Lemma \ref{lemma:5}, $\Delta(i, k) = \Delta(T + e(x, y))$
where $(x, y)$ is a critical pair of $(i, k)$. Since $T + e(x, y)$ is a
unicycle graph, we compute $\Delta(T + e(x, y))$ in $O(n)$ time by
Theorem \ref{theorem:1}. Therefore, $\Delta(i, k)$ can be computed in
$O(n)$ time.
Note that the algorithm of Theorem \ref{theorem:1} also returns a
diametral pair for $T + e(x, y)$, and we can decide which of the five
cases for the functions $\alpha$, $\beta$, $\gamma$, $\delta$, and $\lambda$ the diametral pair belong to. We do the same for $\Delta(i, k + 1)$.
Assume that $\Delta(i, k) = f(i, k)$ and $\Delta(i, k + 1) = g(i, k +
1)$, for two functions $f$ and $g$ in $\{\alpha, \beta, \gamma,
\delta, \lambda\}$. Then we have the following cases
\begin{enumerate}
    \item $f = g$. We have the following subcases.
    \begin{itemize}
        \item $f = g \in \{\beta, \delta\}$. In this case, our
		algorithm proceeds on the interval $[l, k]$. To see this,
		since both $\beta$ and $\delta$ are monotonically increasing
		functions, we have $\Delta(i, j) \geq f(i, j) \geq f(i, k) =
		\Delta(i, k)$, for any $j \in (k, r]$.
		As such, the diameter $\Delta(i,j)$ would increase if we
		proceed on $j\in (k, r]$.

        \item $f = g \in \{\alpha, \gamma\}$. In this case, we proceed on the interval $[k, r]$ because both functions are monotonically decreasing.

        \item $f = g = \lambda$. In this case, we stop the algorithm
		and return $\e(i,k)$ as an optimal $v_i$-shortcut. To see this,
		$\Delta(i, j) \geq \lambda(i, j) = \lambda(i, k) = \Delta(i, k)$ for any $j\in [l,r]$. Hence, $\Delta(i,j)$ achieves the minimum at $j=k$ among all $j\in [l,r]$.
    \end{itemize}

    \item $f \neq g$. We have the following subcases.
    \begin{itemize}
        \item One of $f$ and $g$ is $\lambda$. In this case, by a
		similar argument as before, we return $\e(v_i,v_k)$ as an
		optimal $v_i$-shortcut if $f=\lambda$, and return
		$\e(v_i,v_{k+1})$ as an optimal $v_i$-shortcut if $g=\lambda$.

        \item $\{f,g\}=\{\beta,\delta\}$. In this case, since both
		$\beta$ and $\delta$ are increasing functions, by a similar argument as before, we proceed on the interval $[l,k]$.

        \item $\{f,g\}= \{\alpha,\gamma\}$. In this case, since both
		$\alpha$ and $\gamma$ are decreasing functions, by a similar argument as before, we proceed on the interval $[k,r]$.

        \item One of $f$ and $g$ is in $\{\beta,\delta\}$ and the
		other is in $\{\alpha,\gamma\}$. In this case, one of
		$\e(v_i,v_k)$ and $\e(v_i,v_{k+1})$ is an optimal
		$v_i$-shortcut, which can be determined by comparing $\Delta(i,k)$
		with $\Delta(i,k+1)$. To see this, without loss of generality,
		we assume that $f\in \{\beta,\delta\}$ and $g\in
		\{\alpha,\gamma\}$. Hence,
        $\Delta(i, j) \geq f(i, j) \geq f(i, k) = \Delta(i, k)$ for any $j \in [k + 1, r]$, and $\Delta(i, j) \geq g(i, j)
		\geq g(i, k+1) = \Delta(i, k+1)$ for any $j \in [l, k]$.
		As such, $\min\{\Delta(i,k),\Delta(i,k+1)\}\leq \Delta(i,j)$
		for all $j\in [l,r]$.
    \end{itemize}
\end{enumerate}

The algorithm will find an optimal $v_i$-shortcut in $O(\log n)$ iterations. As each iteration takes $O(n)$ time, the total time of the algorithm is $O(n\log n)$. The space is $O(n)$.
\end{proof}

The proof of the following theorem summarizes our algorithm for the DOAT problem.

\begin{theorem}
    \label{theorem:2}
    The DOAT problem on the tree $T$ can be solved in $O(n^2 \log n)$ time and $O(n)$ space.
\end{theorem}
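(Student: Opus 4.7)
My plan is to assemble the pieces developed in the preceding subsections into a single algorithm and account for its time and space. First, I would compute a diametral path $P$ of $T$ in $O(n)$ time (for instance, by two BFS-like traversals from an arbitrary vertex). This identifies the vertices $v_1,\ldots,v_m$ of $P$, the trees $T(v_i)$, the weights $w(v_i)$, and the partition $\calT$. By Lemma~\ref{lemma:5}, it suffices to solve the DOAP problem on the vertex-weighted path $P$: any optimal shortcut $\e(v_i,v_j)$ for $P$ yields, via a critical pair $(x,y)$ of $(i,j)$, an optimal shortcut $e(x,y)$ for $T$, and the two diameters agree.

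Next, for each index $i \in [1,m-1]$ I would compute an optimal $v_i$-shortcut $\e(v_i,v_{j(i)})$ using the machinery of Section~\ref{subsec:ComputingAnOptimalShortcutforADOAPInstance}. This proceeds in three steps: (1) compute the lengths $|\e(v_i,v_j)|$ for all $j>i$ by brute force over critical pairs in $O(n_i\cdot n)$ time and $O(n)$ space, after $O(n)$ preprocessing on $T$ so that any $d_T(u,v)$ is available in $O(1)$ time; (2) run the stack-based pruning algorithm in $O(n)$ time to obtain the subset $S$ of $v_i$-shortcuts no two of which dominate each other and which still contains an optimal $v_i$-shortcut; (3) invoke Lemma~\ref{lemma:10} to perform the unimodal binary search over $S$, each iteration calling the linear-time unicycle diameter algorithm of Theorem~\ref{theorem:1}, for a cost of $O(n\log n)$ time and $O(n)$ space. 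All space used in the three steps can be reused before moving on to index $i+1$.

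Finally, I would take the minimum over $i$ of $\Delta(P+\e(v_i,v_{j(i)}))$ and report the corresponding shortcut $e(x,y)$ produced from any critical pair of $(i,j(i))$; by Lemma~\ref{lemma:5} this is an optimal shortcut for $T$. For the time bound, the per-index cost is $O(n_i\cdot n + n\log n)$, and summing over $i\in[1,m-1]$ gives
\[
\sum_{i=1}^{m-1}\bigl(O(n_i\cdot n) + O(n\log n)\bigr) \;=\; O\!\left(n\cdot \sum_{i=1}^{m} n_i\right) + O(n\log n \cdot m) \;=\; O(n^2) + O(n^2\log n) \;=\; O(n^2\log n),
\]
while the space is $O(n)$ throughout. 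The only nontrivial step is invoking the already-proved Lemmas~\ref{lemma:5} and~\ref{lemma:10} correctly; there is no new obstacle, since the main technical difficulty (binary-searchability via the five monotone functions together with the linear-time diameter oracle) has already been discharged in Lemma~\ref{lemma:10}. This yields the claimed bounds.
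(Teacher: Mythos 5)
Your proposal is correct and follows essentially the same route as the paper's own proof: compute a diametral path, reduce to the DOAP problem via Lemma~\ref{lemma:5}, and for each $i$ combine the brute-force shortcut-length computation, the stack-based pruning, and the binary search of Lemma~\ref{lemma:10}, reusing space across iterations. The accounting $\sum_i O(n_i\cdot n + n\log n)=O(n^2\log n)$ with $O(n)$ space matches the paper exactly.
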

\begin{proof}
We first compute a diametral path $P$ of $T$ as well as the weights $w(v_i)$ for all vertices $v_i$ of $P$, which takes $O(n)$ time. Then, for each $i\in [1,m-1]$, we compute an optimal $v_i$-shortcut $\e(v_i,v_{j(i)})$ and its diameter $\Delta(P+\e(v_i,v_{j(i)}))$ for the DOAP problem on the vertex-weighted path $P$. To do so, we first compute the critical pairs of $(i,j)$ for all $j\in [i+1,m]$ and thus the lengths of the set $S_i$ of the shortcuts $\e(v_i,v_j)$ for all $j\in [i+1,m]$. This step takes $O(n\cdot n_i)$ time and $O(n)$ space. Next, we run a pruning algorithm to prune those shortcuts that are dominated by others from $S_i$ and obtain a subset $S$ of $S_i$ such that no two shortcuts of $S$ dominate each other and $S$ contains an optimal $v_i$-shortcut; this step takes $O(n)$ time.

After having $S$, by using the algorithm of Theorem~\ref{theorem:1} as a subroutine, a binary search algorithm can find an optimal $v_i$-shortcut $\e(v_i,v_{j(i)})$ as well as the diameter $\Delta(P+\e(v_i,v_{j(i)})$. Recall that the critical pairs of $(i,j)$ for all $j\in [i+1,m]$ have been computed. Let $(x_i,y_i)$ be the critical pair of $(i,j(i))$. We store $(x_i,y_i)$ and $\Delta(P+\e(v_i,v_{j(i)})$. Other space used in this step will be disregarded when we run the same algorithm to compute the optimal $v_i$-shortcuts for other $i$'s; so the total space used in the algorithm is bounded by $O(n)$. After the optimal $v_i$-shortcuts for all $i\in [1,m-1]$ are computed, we determine the index $i$ with minimum $\Delta(P+\e(v_i,v_{j(i)})$ and return $e(x_i,y_i)$ as the optimal shortcut for the DOAT problem on $T$ and return $\Delta(P+\e(i,j(i))$ as the diameter. It takes $O(n\cdot n_i + n\log n)$ time and $O(n)$ space to compute an optimal $v_i$-shortcut for each $i\in [1,m-1]$. Hence, computing optimal $v_i$-shortcuts for all $i\in [1,m-1]$ takes $O(n^2\log n)$ time in total, for $\sum_{i=1}^m n_i=n$. The space is $O(n)$ because only constant space is occupied (for storing $(x_i,y_i)$ and $\Delta(P+\e(v_i,v_{j(i)}))$) after an optimal $v_i$-shortcut is computed for each $i$.
\end{proof}

\footnotesize
\bibliography{reference}

\begin{thebibliography}{10}

\bibitem{ref:AlonDe00}
N.~Alon, A.~Gy\'{a}rf\'{a}s, and M.~Ruszink\'{o}.
\newblock Decreasing the diameter of bounded degree graphs.
\newblock {\em Journal of Graph Theory}, 35:161--172, 2000.

\bibitem{ref:BaeSh19}
S.W. Bae, M.~de~Berg, O.~Cheong, J.~Gudmundsson, and C.~Levcopoulos.
\newblock Shortcuts for the circle.
\newblock {\em Computational Geometry: Theory and Applications}, 79:37--54,
  2019.

\bibitem{ref:BiloAl18}
D.~Bil{\`o}.
\newblock Almost optimal algorithms for diameter-optimally augmenting trees.
\newblock In {\em Proceedings of the 29th International Symposium on Algorithms
  and Computation (ISAAC)}, pages 40:1--40:13, 2018.

\bibitem{ref:BiloIm12}
D.~Bil\`{o}, Luciano Gual\`{a}, and Guido Proietti.
\newblock Improved approximability and non-approximability results for graph
  diameter decreasing problems.
\newblock {\em Theoretical Computer Science}, 417:12--22, 2012.

\bibitem{ref:DeCarufelMi16}
J.-L.~De Carufel, C.~Grimm, A.~Maheshwari, and M.~Smid.
\newblock Minimizing the continuous diameter when augmenting paths and cycles
  with shortcuts.
\newblock In {\em Proceedings of the 15th Scandinavian Workshop on Algorithm
  Theory}, pages 27:1--27:14, 2016.

\bibitem{ref:DeCarufelMi17}
J.-L.~De Carufel, C.~Grimm, S.~Schirra, and M.~Smid.
\newblock Minimizing the continuous diameter when augmenting a tree with a
  shortcut.
\newblock In {\em Proceedings of the 15th Algorithms and Data Structures
  Symposium (WADS)}, pages 301--312, 2017.

\bibitem{ref:ChepoiAu02}
V.~Chepoi and Y.~Vaxes.
\newblock Augmenting trees to meet biconnectivity and diameter constraints.
\newblock {\em Algorithmica}, 33(2):243--262, 2002.

\bibitem{ref:ChungDi84}
F.R.K. Chung and M.R. Garey.
\newblock Diameter bounds for altered graphs.
\newblock {\em Journal of graph theory}, 8(4):511--534, 1984.

\bibitem{ref:DemaineMi10}
E.D. Demaine and M.~Zadimoghaddam.
\newblock Minimizing the diameter of a network using shortcut edges.
\newblock In {\em Proceedings of the 12th Scandinavian conference on Algorithm
  Theory (SWAT)}, pages 420--431, 2010.

\bibitem{ref:FarleyCo80}
A.M. Farley and A.~Proskurowski.
\newblock Computation of the center and diameter of outerplanar graphs.
\newblock {\em Discrete Applied Mathematics}, 2:185--191, 1980.

\bibitem{ref:FedericksonFa87}
G.N. Federickson.
\newblock Fast algorithms for shortest paths in planar graphs, with
  applications.
\newblock {\em SIAM Journal on Computing}, 16:1004--1022, 1987.

\bibitem{ref:FratiAu15}
F.~Frati, S.~Gaspers, J.~Gudmundsson, and L.~Mathieson.
\newblock Augmenting graphs to minimize the diameter.
\newblock {\em Algorithmica}, 72:995--1010, 2015.

\bibitem{ref:GaoTh13}
Y.~Gao, D.R. Hare, and J.~Nastos.
\newblock The parametric complexity of graph diameter augmentation.
\newblock {\em Discrete Applied Mathematics}, 161:1626--1631, 2013.

\bibitem{ref:GrobeFa15}
U.~Gro{\ss}e, J.~Gudmundsson, C.~Knauer, M.~Smid, and F.~Stehn.
\newblock Fast algorithms for diameter-optimally augmenting paths.
\newblock In {\em Proceedings of the 42nd International Colloquium on Automata,
  Languages and Programming}, pages 678--688, 2015.

\bibitem{ref:GrobeFa16}
U.~Gro{\ss}e, J.~Gudmundsson, C.~Knauer, M.~Smid, and F.~Stehn.
\newblock Fast algorithms for diameter-optimally augmenting paths and trees.
\newblock {\em arXiv:1607.05547}, 2016.

\bibitem{ref:ToshimasaAu13}
T.~Ishii.
\newblock Augmenting outerplanar graphs to meet diameter requirements.
\newblock {\em Journal of Graph Theory}, 74(4):392--416, 2013.

\bibitem{ref:JohnsonAL19}
C.~Johnson and H.~Wang.
\newblock A linear-time algorithm for radius-optimally augmenting paths in a
  metric space.
\newblock In {\em Proceedings of the 16th Algorithms and Data Structures
  Symposium (WADS)}, pages 466--480, 2019.

\bibitem{ref:LiOn92}
C.-L. Li, S.T. McCormick, and D.~Simchi-Levi.
\newblock On the minimum-cardinality-bounded-diameter and the
  bounded-cardinality-minimum-diameter edge addition problems.
\newblock {\em Operations Research Letters}, 11:303--308, 1992.

\bibitem{ref:MegiddoLi83}
N.~Megiddo.
\newblock Linear-time algorithms for linear programming in {$R^3$} and related
  problems.
\newblock {\em SIAM Journal on Computing}, 12(4):759--776, 1983.

\bibitem{ref:OhA16}
E.~Oh and H.-K. Ahn.
\newblock A near-optimal algorithm for finding an optimal shortcut of a tree.
\newblock In {\em Proceedings of the 27th International Symposium on Algorithms
  and Computation (ISAAC)}, pages 59:1--59:12, 2016.

\bibitem{ref:Olariu90}
S.~Olariu.
\newblock A simple linear-time algorithm for computing the center of an
  interval graph.
\newblock {\em International Journal of Computer Mathematics}, 34:121--128,
  1990.

\bibitem{ref:SchooneDi97}
A.A. Schoone, H.L. Bodlaender, and J.~Van Leeuwen.
\newblock Diameter increase caused by edge deletion.
\newblock {\em Journal of Graph Theory}, 11:409--427, 1997.

\bibitem{ref:WangAn18}
H.~Wang.
\newblock An improved algorithm for diameter-optimally augmenting paths in a
  metric space.
\newblock {\em Computational Geometry: Theory and Applications}, 75:11--21,
  2018.

\bibitem{ref:WangAl20}
H.~Wang and Y.~Zhao.
\newblock A linear-time algorithm for discrete radius optimally augmenting
  paths in a metric space.
\newblock In {\em Proceedings of the 32nd Canadian Conference on Computational
  Geometry (CCCG)}, pages 174--180, 2020.

\bibitem{ref:WilliamsFa18}
R.~Williams.
\newblock Faster all-pairs shortest paths via circuit complexity.
\newblock {\em SIAM Journal on Computing}, 47:1965--1985, 2018.

\end{thebibliography}

\end{document}